%
%
%
%
%
%
\documentclass[smallextended,natbib,runningheads]{svjour3}
\journalname{Annals of the Institute of Statistical Mathematics}
\smartqed  
\usepackage{graphicx}
%
%
%
\usepackage{amsmath}
\usepackage{amssymb}
\usepackage{dsfont}
\usepackage{url}
\usepackage{color}
\usepackage{natbib}
%
%


\usepackage{amssymb,amsmath}




%
%
\begin{document}

\title{On the usage of randomized p-values in the Schweder--Spj\o tvoll estimator}


\titlerunning{Randomized p-values in the Schweder--Spj\o tvoll estimator}        

\author{Anh-Tuan Hoang         \and
        Thorsten Dickhaus 
}


\institute{A. Hoang \at
              Institute for Statistics, University of Bremen, D-28344 Bremen, Germany\\
							Tel.: +49 421 218-63652\\
              \email{anhtuan.hoang@uni-bremen.de}           
           \and
           T. Dickhaus \at
              Institute for Statistics, University of Bremen, D-28344 Bremen, Germany \\
           Tel.: +49 421 218-63651  \\
					\email{dickhaus@uni-bremen.de}
}

\date{\today}

\maketitle

\begin{abstract}
We are concerned with multiple test problems with composite null hypotheses and the estimation of the proportion $\pi_{0}$ of true null hypotheses. The Schweder-Spj\o tvoll estimator $\hat{\pi}_0$ utilizes marginal $p$-values and only works properly if the $p$-values that correspond to the true null hypotheses are uniformly distributed on $[0,1]$ ($\mathrm{Uni}[0,1]$-distributed). In the case of composite null hypotheses, marginal $p$-values are usually computed under least favorable parameter configurations (LFCs).  Thus, they are stochastically larger than $\mathrm{Uni}[0,1]$ under non-LFCs in the null hypotheses. When using these LFC-based $p$-values, $\hat{\pi}_0$ tends to overestimate $\pi_{0}$. We introduce a new way of randomizing $p$-values that depends on a tuning parameter $c\in[0,1]$, such that $c=0$ and $c=1$ lead to $\mathrm{Uni}[0,1]$-distributed $p$-values, which are independent of the data, and to the original LFC-based $p$-values, respectively. For a 
certain value $c=c^{\star}$ the bias of $\hat{\pi}_0$ is minimized when using our randomized $p$-values. This often also entails a smaller mean squared error of the estimator as compared to the usage of the LFC-based $p$-values. We analyze these points theoretically, and we demonstrate them numerically in computer simulations under various standard statistical models.
\keywords{Bias \and Composite null hypotheses\and Mean squared error \and Multiple testing \and Proportion of true null hypotheses}
\end{abstract}


\section{Introduction}
\label{sec:introduction}


In multiple test problems with composite null hypotheses, to account for type $\mathrm{I}$ errors, marginal tests are usually calibrated with respect to least favorable parameter configurations (LFCs). These are parameters values in (or on the boundary of) the corresponding null hypotheses under which the marginal tests are most likely to reject. Under certain assumptions, the resulting marginal LFC-based $p$-values are then uniformly distributed on $[0,1]$ ($\mathrm{Uni}[0,1]$-distributed) under LFCs, but stochastically larger than $\mathrm{Uni}[0,1]$ under non-LFCs in the null hypothesis. Under the alternative, LFC $p$-values usually tend to be stochastically smaller than $\mathrm{Uni}[0,1]$.

While the latter property is desirable in terms of protecting against type II errors, the deviation from uniformity under null hypotheses is problematic for some estimators of the proportion $\pi_{0}$ of true null hypotheses that use the empirical cumulative distribution function (ecdf) of all marginal $p$-values. We will denote the latter ecdf by $\hat{F}_{m}$ throughout the remainder, where $m$ is the number of all null hypotheses. One ecdf-based estimator for $\pi_{0}$ was introduced by \cite{schweder1982plots}, and it is given by
\begin{equation}
\label{eq:schwederspjotvoll}
\hat{\pi}_{0} \equiv \hat{\pi}_{0}(\lambda)=(1-\hat{F}_{m}(\lambda))/(1-\lambda),
\end{equation}  
where $\lambda\in(0,1)$ is a tuning parameter. The estimator $\hat{\pi}_{0}(\lambda)$ only works properly if the marginal $p$-values that correspond to the true null hypotheses are $\mathrm{Uni}[0,1]$-distributed. It is an unbiased estimator if all $p$-values that correspond to the false null hypotheses are smaller than $\lambda$ with probability one and all $p$-values that correspond to the true null hypotheses are $\mathrm{Uni}[0,1]$-distributed. Since (valid) $p$-values are stochastically not smaller than $\mathrm{Uni}[0,1]$ under the null, $\hat{\pi}_{0}(\lambda)$ is non-negatively biased. It is also known for a longer time (cf., e.\ g., the discussion by \cite{storey2004strong} after their Eq. (4)), that the variance of $\hat{\pi}_{0}(\lambda)$ increases with increasing $\lambda$ in most cases.

The aforementioned deviation from $\mathrm{Uni}[0,1]$ happens for instance in case of discrete models, which has been, among others, investigated by \cite{finner2007note}, \cite{habiger2011randomised}, \cite{dickhaus2012how} and \cite{habiger2015multiple}. In case of composite null hypotheses, the deviation of $p$-values from uniformity occurs, when marginal test statistics do not have a unique distribution under the null hypotheses and the marginal tests hence cannot be calibrated precisely with respect to their type I error probabilities. To provide more uniform $p$-values under composite null hypotheses \cite{dickhaus2013randomized} proposed randomized $p$-values that result from a data-dependent mixing of the LFC-based $p$-values and additional $\mathrm{Uni}[0,1]$-distributed random variables that are (stochastically) independent of the data. In certain models, these randomized $p$-values can be simplified to have a linear structure (cf. \cite{hoang2019randomized}).

While accurate estimations of $\pi_{0}$ are valuable in themselves, they can also improve the power of existing multiple test procedures. Namely, many of such procedures are (implicitly) calibrated to control the family-wise error rate (FWER) or the false discovery rate (FDR), respectively, for the case that every null hypothesis is true, that is, in case of $\pi_{0}=1$, which is often the worst case. If some null hypotheses are false, these procedures become over-conservative. Adjusting them according to a pre-estimate of $\pi_{0}$ can improve the overall power of these tests. \cite{benjamini2000adaptive} discuss these so-called adaptive procedures where the original procedure is the linear step-up test from \cite{benjamini1995controlling}. \cite{storey2003positive} proved that applying the linear step-up test by \cite{benjamini1995controlling} at an adjusted level controls the FDR if the $p$-values are independent. \cite{finner2009controlling} investigated the use of estimators of $\pi_{0}$ as plug-in estimators in single-step or step-down procedures and proved that the Bonferroni procedure at an adjusted level controls the FWER if the marginal $p$-values are independent. Further results and references on adaptive multiple tests (for FDR control) can be found in \cite{Heesen1,Heesen2}, and \cite{Heesen3}.

We focus on the case of composite null hypotheses and present a new way of randomizing LFC-based $p$-values. To this end, we utilize a set of stochastically independent and identically $\mathrm{Uni}[0,1]$-distributed random variables $U_{1},\ldots,U_{m}$, which are (stochastically) independent of the data $X$, as well as a set of constants $c_{1},\ldots,c_{m}$, where $c_j \in[0,1]$ for all $1 \leq j \leq m$. For an LFC-based $p$-value $p_{j}^{LFC}(X)$ we propose randomized $p$-values defined as
\begin{equation}
\label{eq:randomizedpvalues}
p_{j}^{rand}(X,U_{j},c_{j})=U_{j}\textbf{1}\{p_{j}^{LFC}(X)\geq c_{j}\}+p_{j}^{LFC}(X)c_{j}^{-1}\textbf{1}\{p_{j}^{LFC}(X)<c_{j}\},
\end{equation} 
$j=1\ldots,m$.

In many models this definition comprises the one of \cite{dickhaus2013randomized} for certain values of $c_{j}\in[0,1]$ (cf. \cite{hoang2019randomized}). It is clear that $c_{j}$ determines how close $p_{j}^{rand}$ is to either $U_{j}$ or $p_{j}^{LFC}$. The choices $c_{j}=0$ and $c_{j}=1$ lead to $p_{j}^{rand}=U_{j}$ or $p_{j}^{rand}=p_{j}^{LFC}$ (with probability one), respectively. Under certain conditions, it holds $U_{j}\leq_{\mathrm{st}}p_{j}^{rand}\leq_{\mathrm{st}}p_{j}^{LFC}$ under the $j$-th null hypothesis and $p_{j}^{LFC}\leq_{\mathrm{st}}p_{j}^{rand}\leq_{\mathrm{st}}U_{j}$ under the $j$-th alternative, where $\leq_{\mathrm{st}}$ denotes the stochastic order (see, e.\ g.,  Corollary \ref{cor:orderinexamples} below). While $\mathrm{Uni}[0,1]$-distributed $p$-values are desirable under null hypotheses, we want to keep them small under alternatives.
When using $p_{1}^{rand}(X,U_{1},c_{1}),\ldots,p_{m}^{rand}(X,U_{m},c_{m})$ in $\hat{\pi}_{0}$, we discuss how the choice of the constants $c_{1},\ldots,c_{m}$ affects the bias of $\hat{\pi}_{0}$. Under the restriction of identical $c_{j}$'s, we find that there exists a $c^{\star}\in[0,1]$ for which $\hat{\pi}_{0}$ has minimal bias when using $p_{1}^{rand}(X,U_{1},c^{\star}),\ldots,p_{m}^{rand}(X,U_{m},c^{\star})$.

The rest of the work is organized as follows. In Section$~\ref{sec:modelsetup}$ we provide the model framework. In Section$~\ref{sec:randomizedpvalues}$ we analyze properties of our proposed randomized $p$-values, and compare them to the LFC-based ones. Section$~\ref{sec:proportionoftruenullhypotheses}$ presents computer simulations to evaluate the performance of the proposed randomized $p$-values in estimating 
$\pi_{0}$. We conclude with a discussion in Section$~\ref{sec:discussion}$.

\section{Model Setup}
\label{sec:modelsetup}

We consider a statistical model $(\Omega, \mathcal{F}, (\mathbb{P}_{\vartheta})_{\vartheta\in\Theta})$, where $\vartheta$ denotes the parameter of the model and $\Theta$ the corresponding parameter space. In the context of multiple testing we define a derived parameter $\theta = \theta(\vartheta) =(\theta_{1}(\vartheta),\ldots,\theta_{m}(\vartheta))^\top$ with values in $\mathbb{R}^{m}$, $m\geq 2$. The $j$-th component $\theta_{j}(\vartheta)$ of this derived parameter is assumed to be the object of interest in the $j$-th null hypothesis $H_{j}$, $j=1,\ldots, m$, where the family of $m$ null hypotheses $H_{1},\ldots,H_{m}$ and the family of their corresponding alternatives $K_{1}, \ldots, K_{m}$ consist of non-empty Borel sets of $\mathbb{R}$. For each $j=1, \ldots, m$ we test $\theta_{j}(\vartheta)\in H_{j}$ against $\theta_{j}(\vartheta)\in K_{j}=\mathbb{R}\setminus H_{j}$.

We assume that for each $j=1,\ldots,m$ a test statistic $T_{j}:\Omega\to\mathbb{R}$ and a rejection region 
$\Gamma_{j}(\alpha) \subset \mathbb{R}$ are given, where $\alpha\in(0,1)$ denotes a fixed, local significance level. We denote by $x\in\Omega$ the realization of $X$. The test statistics $\{T_{j}(X)\}_{1 \leq j \leq m}$ are assumed to have absolutely continuous distributions with respect to the Lebesgue measure under any $\vartheta\in\Theta$. The marginal tests $\varphi_{j}$ for testing $H_{j}$ versus $K_{j}$ are given by 
$\varphi_{j}(X)=\textbf{1}\{T_{j}(X)\in\Gamma_{j}(\alpha)\}$, where $\varphi_j(x) = 1$ means rejection of $H_j$ in favor of $K_j$ and $\varphi_j(x) = 0$ means that $H_j$ is retained, for observed data $x$ and $1\leq j \leq m$. Note, that we do not make any (general) assumptions about the dependency structure among the different test statistics at this point.

Furthermore, we make the following additional general assumptions:
\begin{description}
\item[$(A1)$] Nested rejection regions: For every $j=1,\ldots,m$ and $\alpha'<\alpha$, it holds that $\Gamma_{j}(\alpha')\subseteq\Gamma_{j}(\alpha)$.
\item[$(A2)$] For every $j=1,\ldots,m$, it holds $\underset{\vartheta:\theta_{j}(\vartheta)\in H_{j}}{\mathrm{sup}}\mathbb{P}_{\vartheta}(T_{j}(X)\in\Gamma_{j}(\alpha))=\alpha$.
\item[$(A3)$] The set of LFCs for $\varphi_{j}$, i.\ e., the set of parameter values that yield the supremum in $(A2)$, does not depend on $\alpha$.
\end{description}

Under assumption $(A1)$, rejections at significance levels $\alpha'$ always imply rejections at larger significance levels $\alpha>\alpha'$. Assumption $(A2)$ means that under any LFC for $\varphi_{j}$ the rejection probability is exactly $\alpha$.

LFC-based $p$-values for the marginal tests $\{\varphi_{j}\}_{1 \leq j \leq m}$ are formally defined as
\begin{equation*}
p_{j}^{LFC}(X)=\underset{\{\tilde{\alpha}\in(0,1):T_{j}(x)\in\Gamma_{j}(\tilde{\alpha})\}}{\mathrm{inf}}\;\underset{\{\vartheta:\theta_{j}(\vartheta)\in H_{j}\}}{\mathrm{sup}}\mathbb{P}_{\vartheta}(T_{j}(X)\in\Gamma_{j}(\tilde{\alpha})).
\end{equation*}
Under assumptions $(A1)$ -- $(A3)$, we obtain that 
\begin{equation}\label{plfc}
p_{j}^{LFC}(X)={\mathrm{inf}\{\tilde{\alpha}\in(0,1):T_{j}(X)\in\Gamma_{j}(\tilde{\alpha})\}}, \; j=1,\ldots,m.
\end{equation} 
With assumption $(A2)$, any such LFC-based $p$-value $p_{j}^{LFC}(X)$ is uniformly distributed on $[0,1]$ under any LFC for $\varphi_{j}$; cf. Lemma 3.3.1 of 
\cite{lehmann2005testing}. Let $F_{\vartheta}$ be the cumulative distribution function (cdf) of $T_{j}(X)$ under $\vartheta \in \Theta$. If the rejection region $\Gamma_{j}(\alpha)$ is given by $(F_{\vartheta_{0}}^{-1}(1-\alpha),\infty)$, where $\vartheta_{0}$ is an LFC for $\varphi_{j}$, then the definition in 
\eqref{plfc} simplifies to $p_{j}^{LFC}(X)=1-F_{\vartheta_{0}}(T_{j}(X))$. Rejection regions of that type are typical if test statistics tend to larger values under alternatives, which is often the case.

As examples, we give two models that fulfill the general assumptions $(A1)$ -- $(A3)$.

\begin{example}[Multiple $Z$-tests model]\label{ex:multipleztests}
We consider $X=(X_{i,j}:i=1,\ldots,n_{j},\,j=1,\ldots,m)$, where $(n_{j})_{j=1,\ldots,m}$ are fixed sample sizes. For all $j$ the random variables $X_{1,j},\ldots,X_{n_{j},j}$ are assumed to be stochastically independent and identically normally distributed as $N(\theta_{j}(\vartheta),1)$, where $\vartheta=(\vartheta_{1},\ldots,\vartheta_{m})^\top\in\Theta=\mathbb{R}^{m}$ is the (main) parameter of the model and $\theta(\vartheta)$, given by $\theta_{j}(\vartheta)=\vartheta_{j}$ for $1 \leq j \leq m$, is the derived parameter. For each $1 \leq j \leq m$, we are interested in the null hypothesis $H_{j}:\vartheta_{j}\leq 0$ against its alternative $K_{j}:\vartheta_{j}>0$, and consider the test statistic $T_{j}(X)=n_{j}^{-1}\sum_{i=1}^{n_{j}}X_{i,j}\sim N(\vartheta_{j},n_{j}^{-1})$. Furthermore, we let $\Gamma_{j}(\alpha)=(\Phi_{(0,n_{j}^{-1})}^{-1}(1-\alpha),\infty)$, leading to the LFC-based $p$-value $p_{j}^{LFC}(X)=1-\Phi_{(0,n_{j}^{-1})}(T_{j}(X))$, where $\Phi_{(\mu, \sigma^2)}$ denotes the cdf of the normal distribution on $\mathbb{R}$ with parameters $\mu$ and $\sigma^2$.  For each $j=1,\ldots,m$, the set of LFCs for $\varphi_{j}$ is $\{\vartheta\in\Theta:\vartheta_{j}=0\}$, independently of $\alpha$. As mentioned before, we do not specify the dependency structure of $T_{j_1}(X)$ and $T_{j_2}(X)$ for $1 \leq j_1 \neq j_2 \leq m$. The latter dependency structure may be regarded as a further (nuisance) parameter of the model.
\end{example}

\begin{example}[Two-sample means comparison model]\label{ex:twosamples}
Let $j=1,\ldots,m$ be fixed. For given sample sizes $n_{1,j}$ and $n_{2,j}$, let $X_{1,j},\ldots,X_{n_{1,j},j}$ and $Y_{1,j},\ldots,Y_{n_{2,j},j}$ be jointly stochastically independent, observable random variables. Assume that $X_{1,j},\ldots,$ $X_{n_{1,j},j}$ are identically distributed with $X_{1,j} \sim N(\theta_{1,j}(\vartheta),\sigma^2_{j})$, and that $Y_{1,j},\ldots,$ $Y_{n_{2,j},j}$ are identically distributed with $Y_{1,j} \sim N(\theta_{2,j}(\vartheta),\sigma^2_{j})$, where $\sigma^2_{j}>0$ is unknown. Similarly as in Example \ref{ex:multipleztests}, the parameter vector $\vartheta$ consists of all unknown means and all unknown variances of the model. For each $1 \leq j \leq m$, we compare the means of the two samples. To this end, we let $\theta_{j}(\vartheta) = \theta_{1,j}(\vartheta) - \theta_{2,j}(\vartheta)$ and assume that $H_{j}:\theta_{j}(\vartheta) \leq 0$ versus $K_{j}: \theta_{j}(\vartheta) > 0$ is the marginal test problem of interest. Let $\bar{X}_{j}=n_{1,j}^{-1}\sum_{i=1}^{n_{1,j}}X_{i,j}$, $\bar{Y}_{j}=n_{2,j}^{-1}\sum_{i=1}^{n_{2,j}}Y_{i,j}$, and 
\begin{equation*}
S_{j}(X)=\frac{1}{n_{1,j}+n_{2,j}-2}\Big[\sum_{i=1}^{n_{1,j}}(X_{i,j}-\bar{X}_{j})^{2}+\sum_{i=1}^{n_{2,j}}(Y_{i,j}-\bar{Y}_{j})^{2}\Big].
\end{equation*}
Under an LFC for $\varphi_{j}$, that is, any $\vartheta\in\Theta$ with $\theta_{j}(\vartheta)=0$, the test statistic
\begin{equation*}
T_{j}(X)=\sqrt{\frac{n_{1,j}n_{2,j}}{n_{1,j}+n_{2,j}}}(\bar{X}_{j}-\bar{Y}_{j})/S_{j}
\end{equation*} 
follows Student's $t$-distribution with $n_{1,j}+n_{2,j}-2$ degrees of freedom, denoted by $t_{n_{1,j}+n_{2,j}-2}$. The corresponding rejection region is $\Gamma_{j}(\alpha)=(F_{t_{n_{1,j}+n_{2,j}-2}}^{-1}(1-\alpha),\infty)$ and the LFC-based $p$-value is given by $p_{j}^{LFC}(X)=1-F_{t_{n_{1,j}+n_{2,j}-2}}(T_{j}(X))$, where $F_{t_{n_{1,j}+n_{2,j}-2}}$ denotes the cdf of $t_{n_{1,j}+n_{2,j}-2}$. Again, the aforementioned set of LFCs for $\varphi_j$ does not depend on $\alpha$, for each $1 \leq j \leq m$. For the dependency structure among different coordinates $j_1 \neq j_2$, we argue as in Example \ref{ex:multipleztests}.
\end{example}

\section{The randomized $p$-values}\label{sec:randomizedpvalues}
 
\subsection{General properties}\label{sec:generalproperties}

\begin{definition}\label{def:randomizedpvalues}
Let a model as in Section$~\ref{sec:modelsetup}$ and a set of random variables $U_{1},\ldots,U_{m}$, that are defined on the same probability space as $X$, jointly stochastically independent, identically $\mathrm{Uni}[0,1]$-distributed (under any $\vartheta \in \Theta$), and stochastically independent of the data $X$, be given. For each $j=1,\ldots,m$ and given constants $c_{1},\ldots,c_{m}$ with $c_j \in [0,1]$ for all $1 \leq j \leq m$, we define our randomized $p$-values as in Equation \eqref{eq:randomizedpvalues}, where $p_{j}^{rand}(X,U_{j}, 0)=U_{j}$ by convention.
\end{definition}

For a more general definition of these $p$-values, we refer to the appendix. Before we discuss the properties of these randomized $p$-values and compare them to LFC-based ones, we give a few remarks.

\begin{remark} $ $
\begin{description}
\item[$(a.)$] If $p_{j}^{LFC}(X)$ is stochastically large, then  it is likely that $p_{j}^{rand}(X,U_{j},c_{j})=U_{j}$ holds. This means that under the null hypothesis $H_{j}$, the distribution of $p_{j}^{rand}$ will typically be close to a $\mathrm{Uni}[0,1]$-distribution. On the other hand, if $K_{j}$ is true and $p_{j}^{LFC}(X)$ is stochastically small, the randomized $p$-value $p_{j}^{rand}(X,U_{j},c_{j})$ is more likely to be equal to $p_{j}^{LFC}(X)/c_{j}\geq p_{j}^{LFC}(X)$ than it is to be equal to $U_{j}$.
\item[$(b.)$] Under an LFC $\vartheta_{0}$ for $\varphi_{j}$ the randomized $p$-value $p_{j}^{rand}(X,U_{j},c_{j})$ is uniformly distributed on $[0,1]$ for any $1 \leq j \leq m$. Namely, it holds that
\begin{align*}
\mathbb{P}_{\vartheta_{0}}(p_{j}^{rand}(X,U_{j},c_{j})\leq t)&=\mathbb{P}_{\vartheta_{0}}(U_{j}\leq t)\,\mathbb{P}_{\vartheta_{0}}(p_{j}^{LFC}(X)\geq c_{j})+\mathbb{P}_{\vartheta_{0}}(p_{j}^{LFC}(X)<t c_{j})\\
&=t(1-c_{j})+t\,c_{j}=t,
\end{align*}
where we have used that $p_{j}^{LFC}(X)$ is $\mathrm{Uni}[0,1]$-distributed under any LFC $\vartheta_{0}$ for $\varphi_{j}$, due to assumptions $(A1)$ -- $(A2)$, and that $U_j$ is always $\mathrm{Uni}[0,1]$-distributed, no matter the value of $\vartheta$.
\end{description}
\end{remark}


As mentioned in Section$~\ref{sec:introduction}$, the use of valid $p$-values in the Schweder-Spj\o tvoll estimator ensures that the latter has a non-negative bias; cf. Lemma 1 of \cite{dickhaus2012how}. Therefore it is of interest to give some conditions for the validity of our randomized $p$-values.

\begin{theorem}\label{thm:randpvalidity}
Let a model as in Section$~\ref{sec:modelsetup}$ be given and $j\in\{1,\ldots,m\}$ be fixed. 
Then, $p_{j}^{rand}(X,U_{j},c_{j})$ is a valid $p$-value for a given $c_{j}\in[0,1]$ if and only if the following condition (1.) is fulfilled. Furthermore, either of the following conditions (2.) and (3.) is a sufficient condition for the validity of $p_{j}^{rand}(X,U_{j},c_{j})$ for any $c_j \in [0, 1]$.

\begin{description}
\item[$(1.)$] For every $\vartheta\in\Theta$ with $\theta_{j}(\vartheta)\in H_{j}$, it holds
\begin{equation*}
\mathbb{P}_{\vartheta}(p_{j}^{LFC}(X)\leq t\,c_{j})\leq t\,\mathbb{P}_{\vartheta}(p_{j}^{LFC}(X)\leq c_{j})
\end{equation*}
for all $t\in[0,1]$.
\item[$(2.)$] For every $\vartheta\in\Theta$ with $\theta_{j}(\vartheta)\in H_{j}$, $\mathbb{P}_{\vartheta}(p_{j}^{LFC}(X)\leq t)/t$ is non-decreasing in $t$.
\item[$(3.)$] The cdf of $p_{j}^{LFC}(X)$ is convex under any parameter $\vartheta\in\Theta$ with $\theta_{j}(\vartheta)\in H_{j}$.
\end{description}

If the LFC-based $p$-value is given by $p_{j}^{LFC}(X)=1-F_{\vartheta_{0}}(T_{j}(X))$, where $\vartheta_{0}\in\Theta$ is an LFC for $\varphi_{j}$, then the following condition $(4.)$ is equivalent to condition 
$(2.)$, while condition $(5.)$ is equivalent to condition $(3.)$.

\begin{description}
\item[$(4.)$] For every $\vartheta\in\Theta$ with $\theta_{j}(\vartheta)\in H_{j}$, it holds $T_{j}(X)^{(\vartheta)}\leq_{\mathrm{hr}}T_{j}(X)^{(\vartheta_{0})}$.
\item[$(5.)$] For every $\vartheta\in\Theta$ with $\theta_{j}(\vartheta)\in H_{j}$, it holds $T_{j}(X)^{(\vartheta)}\leq_{\mathrm{lr}}T_{j}(X)^{(\vartheta_{0})}$.
\end{description}
With $\leq_{\mathrm{hr}}$ and $\leq_{\mathrm{lr}}$ we mean the hazard rate order and the likelihood ratio order, respectively. The notation $T_{j}(X)^{(\vartheta)}$ refers to the distribution of $T_{j}(X)$ under $\vartheta\in\Theta$. The relationship $T_{j}(X)^{(\vartheta)}\leq_{\mathrm{hr}}T_{j}(X)^{(\vartheta_{0})}$ is equivalent to $(1-F_{\vartheta_{0}}(t))/(1-F_{\vartheta}(t))$ being non-decreasing in $t$, and $T_{j}(X)^{(\vartheta)}\leq_{\mathrm{lr}}T_{j}(X)^{(\vartheta_{0})}$ is equivalent to $f_{\vartheta_{0}}(t)/f_{\vartheta}(t)$ being non-decreasing in $t$, where $f_{\vartheta}$ denotes the Lebesgue density of $T_{j}(X)$ under $\vartheta\in\Theta$.
\end{theorem}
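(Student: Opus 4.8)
The plan is to treat the five claims in sequence, starting from the definition of validity: $p_j^{rand}(X,U_j,c_j)$ is a valid $p$-value if and only if $\mathbb{P}_\vartheta(p_j^{rand}(X,U_j,c_j)\le t)\le t$ for all $t\in[0,1]$ and all $\vartheta$ with $\theta_j(\vartheta)\in H_j$. First I would compute the cdf of $p_j^{rand}$ under such a $\vartheta$ exactly as in Remark (b.), using independence of $U_j$ and $X$: for $t\in[0,1]$,
\begin{equation*}
\mathbb{P}_\vartheta(p_j^{rand}(X,U_j,c_j)\le t)=t\,\mathbb{P}_\vartheta(p_j^{LFC}(X)\ge c_j)+\mathbb{P}_\vartheta(p_j^{LFC}(X)\le t\,c_j).
\end{equation*}
Requiring this to be $\le t$ and rearranging gives exactly condition (1.), namely $\mathbb{P}_\vartheta(p_j^{LFC}(X)\le t\,c_j)\le t\,\mathbb{P}_\vartheta(p_j^{LFC}(X)\le c_j)$; this establishes the ``if and only if'' part for a fixed $c_j$. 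The boundary cases $c_j=0$ (where $p_j^{rand}=U_j$ by convention, trivially valid) and $c_j=1$ (where condition (1.) reads $\mathbb{P}_\vartheta(p_j^{LFC}\le t)\le t$, i.e.\ validity of the LFC-based $p$-value itself) should be noted as consistency checks.

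Next I would show (2.) $\Rightarrow$ (1.) for every $c_j\in[0,1]$. Writing $G_\vartheta(t):=\mathbb{P}_\vartheta(p_j^{LFC}(X)\le t)$, condition (2.) says $t\mapsto G_\vartheta(t)/t$ is non-decreasing; applied to $t\,c_j\le c_j$ this yields $G_\vartheta(t\,c_j)/(t\,c_j)\le G_\vartheta(c_j)/c_j$, which on multiplying by $t\,c_j$ is precisely condition (1.). For (3.) $\Rightarrow$ (2.), I would use that a convex cdf $G_\vartheta$ on $[0,1]$ with $G_\vartheta(0)=0$ has the property that $G_\vartheta(t)/t$ is the slope of the chord from the origin, which is non-decreasing by convexity (the standard ``convex function through the origin has non-decreasing difference quotient'' argument). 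Thus (3.) $\Rightarrow$ (2.) $\Rightarrow$ (1.), giving the two sufficient conditions.

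For the final equivalences I would specialize to $p_j^{LFC}(X)=1-F_{\vartheta_0}(T_j(X))$. Then $\{p_j^{LFC}(X)\le t\}=\{T_j(X)\ge F_{\vartheta_0}^{-1}(1-t)\}$, so with $s=F_{\vartheta_0}^{-1}(1-t)$ one has $G_\vartheta(t)=1-F_\vartheta(s)$ and $t=1-F_{\vartheta_0}(s)$, whence $G_\vartheta(t)/t=(1-F_\vartheta(s))/(1-F_{\vartheta_0}(s))$; since $s$ is a decreasing function of $t$, ``$G_\vartheta(t)/t$ non-decreasing in $t$'' translates into ``$(1-F_\vartheta(s))/(1-F_{\vartheta_0}(s))$ non-increasing in $s$'', equivalently $(1-F_{\vartheta_0}(s))/(1-F_\vartheta(s))$ non-decreasing in $s$, which is exactly the hazard rate order $T_j(X)^{(\vartheta)}\le_{\mathrm{hr}}T_j(X)^{(\vartheta_0)}$; this proves (2.) $\Leftrightarrow$ (4.). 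For (3.) $\Leftrightarrow$ (5.), I would differentiate: $G_\vartheta'(t)=f_\vartheta(s)/f_{\vartheta_0}(s)$ (chain rule, using $ds/dt=-1/f_{\vartheta_0}(s)$), so convexity of $G_\vartheta$ (i.e.\ $G_\vartheta'$ non-decreasing in $t$) is equivalent to $f_\vartheta(s)/f_{\vartheta_0}(s)$ non-increasing in $s$, equivalently $f_{\vartheta_0}(s)/f_\vartheta(s)$ non-decreasing in $s$, which is the likelihood ratio order $T_j(X)^{(\vartheta)}\le_{\mathrm{lr}}T_j(X)^{(\vartheta_0)}$. The main obstacle I anticipate is handling regularity/technical points in these last two steps carefully — ensuring $F_{\vartheta_0}$ is strictly increasing where needed so that the change of variables $s=F_{\vartheta_0}^{-1}(1-t)$ is well defined, dealing with possible non-differentiability (the absolute continuity assumed in Section~\ref{sec:modelsetup} helps, but one may need to phrase convexity via chords rather than second derivatives), and being precise about the ranges of $t$ and $s$ over which the monotonicity statements are asserted. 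Everything else is essentially bookkeeping with the explicit cdf formula.
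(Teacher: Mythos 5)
Your proposal is correct and follows essentially the same route as the paper's proof (given there for the more general randomized $p$-values with random $R_j$): compute the cdf of $p_j^{rand}$, rearrange to get the equivalence with condition $(1.)$, and translate $(2.)$ and $(3.)$ into $(4.)$ and $(5.)$ via the substitution $s=F_{\vartheta_0}^{-1}(1-t)$. The only cosmetic difference is that you derive $(3.)\Rightarrow(2.)$ by the chord-slope argument for a convex function vanishing at the origin, whereas the paper shows $(3.)\Rightarrow(1.)$ directly by comparing the convex map $t\mapsto\mathbb{P}_\vartheta(p_j^{LFC}(X)\le t c_j)$ with the linear map $t\mapsto t\,\mathbb{P}_\vartheta(p_j^{LFC}(X)\le c_j)$, which agree at $t=0$ and $t=1$; both rest on the same implicit fact that $\mathbb{P}_\vartheta(p_j^{LFC}(X)\le 0)=0$.
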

The proof of Theorem \ref{thm:randpvalidity} is given in the appendix.

\begin{corollary}\label{lm:thm1lemma1}
Under the models from Examples$~\ref{ex:multipleztests}$ and $\ref{ex:twosamples}$, the randomized $p$-values $(p_{j}^{rand}(X,U_{j},c_{j}))_{1 \leq j \leq m}$ are valid for any $(c_{1},\ldots,c_{m})^\top \in[0,1]^m$.
\end{corollary}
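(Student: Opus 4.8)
\textbf{Proof proposal for Corollary \ref{lm:thm1lemma1}.}

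The plan is to invoke Theorem \ref{thm:randpvalidity}, and in particular its sufficient condition (5.) (equivalently (3.)), which requires that for every $\vartheta$ with $\theta_j(\vartheta)\in H_j$ one has the likelihood ratio ordering $T_j(X)^{(\vartheta)}\leq_{\mathrm{lr}}T_j(X)^{(\vartheta_0)}$, where $\vartheta_0$ is an LFC for $\varphi_j$. Since in both examples the rejection region is of the form $(F_{\vartheta_0}^{-1}(1-\alpha),\infty)$, so that $p_j^{LFC}(X)=1-F_{\vartheta_0}(T_j(X))$, condition (5.) applies and it suffices to verify the monotonicity of the ratio $f_{\vartheta_0}(t)/f_{\vartheta}(t)$ in $t$ for each fixed $j$. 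Because the argument is entirely marginal (fix $j$, and recall the dependency structure across coordinates is irrelevant for validity of a single $p_j^{rand}$), I would treat the two examples separately.

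For Example \ref{ex:multipleztests}, $T_j(X)\sim N(\vartheta_j,n_j^{-1})$, and an LFC for $\varphi_j$ is $\vartheta_0$ with $(\vartheta_0)_j=0$, while an arbitrary null parameter $\vartheta$ has $\vartheta_j\leq 0$. The density ratio is
\begin{equation*}
\frac{f_{\vartheta_0}(t)}{f_{\vartheta}(t)}=\exp\!\Big(-\frac{n_j}{2}t^2+\frac{n_j}{2}(t-\vartheta_j)^2\Big)=\exp\!\Big(-n_j\vartheta_j t+\tfrac{n_j}{2}\vartheta_j^2\Big),
\end{equation*}
which is non-decreasing in $t$ precisely because $-n_j\vartheta_j\geq 0$ when $\vartheta_j\leq 0$. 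Hence condition (5.) holds, and Theorem \ref{thm:randpvalidity} gives validity of $p_j^{rand}(X,U_j,c_j)$ for every $c_j\in[0,1]$; since this holds for each $j$ simultaneously, the full vector $(p_j^{rand}(X,U_j,c_j))_{1\le j\le m}$ is valid for any $(c_1,\ldots,c_m)^\top\in[0,1]^m$.

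For Example \ref{ex:twosamples} the same strategy applies, but the verification of the likelihood ratio order is the main obstacle, since $T_j(X)$ follows a noncentral $t$-distribution under a non-LFC null parameter (noncentrality proportional to $\theta_j(\vartheta)\le 0$) and a central $t$-distribution under the LFC. I would use the classical fact that the family of noncentral $t$-distributions with fixed degrees of freedom has monotone likelihood ratio in the observation with respect to the noncentrality parameter (this is standard; see, e.g., the monotone-likelihood-ratio structure of the noncentral $t$ family in \cite{lehmann2005testing}). Concretely, writing $\delta_\vartheta$ for the noncentrality parameter, $\delta_{\vartheta_0}=0\ge\delta_\vartheta$ for any null $\vartheta$, and MLR in $\delta$ means $f_{\delta_{\vartheta_0}}(t)/f_{\delta_\vartheta}(t)$ is non-decreasing in $t$, which is exactly condition (5.). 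Alternatively, one can bypass the noncentral-$t$ density altogether by noting that $T_j(X)$ is a monotone (increasing) function of $\bar X_j-\bar Y_j$ given $S_j$, and that $(\bar X_j-\bar Y_j, S_j)$ is a sufficient statistic whose normal--(scaled chi-square) joint density has MLR in $\bar X_j-\bar Y_j$ with respect to the mean difference; one then transfers the order through the monotone map and the independence of $S_j$ from $\bar X_j-\bar Y_j$. Either route yields condition (5.) for each $j$, and concatenating over $j$ as before completes the proof. \qed
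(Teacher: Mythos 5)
Your proposal is correct and follows essentially the same route as the paper: verify condition (5.) of Theorem \ref{thm:randpvalidity} via the likelihood ratio order, using the explicit Gaussian density ratio for Example \ref{ex:multipleztests} and the MLR property of the non-central $t$ family (which the paper attributes to \cite{karlin1956decision} and \cite{karlin1956distributions}) for Example \ref{ex:twosamples}. The only differences are cosmetic: you write out the normal density ratio explicitly where the paper calls it ``easy to show,'' and you sketch an optional alternative via sufficiency of $(\bar X_j-\bar Y_j,S_j)$ that the paper does not need.
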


\begin{proof}
The multiple $Z$-tests model from Example$~\ref{ex:multipleztests}$ fulfills the general assumptions $(A1)$ -- $(A3)$ from Section$~\ref{sec:modelsetup}$. Let $j \in \{1, \hdots, m\}$ be arbitrarily chosen. For a parameter value $\vartheta\in\Theta$ with $\theta_{j}(\vartheta)=\vartheta_{j}\in H_{j}$, i.\ e., $\vartheta_{j}\leq 0$, it is easy to show that $f_{0}(t)/f_{\vartheta_{j}}(t)$ is non-decreasing in $t$, where $f_{z}$ denotes the Lebesgue density of the $N(z,n_{j}^{-1})$-distribution. Following Theorem$~\ref{thm:randpvalidity}$, $p_{j}^{rand}(X,U_{j},c_{j})$ is valid for any constant $c_{j} \in [0,1]$. The choice of $c_{j}=1/2$ for all $1 \leq j \leq m$ results in the randomized $p$-values from \cite{dickhaus2013randomized} for this model.

The two-sample means comparison model from Example$~\ref{ex:twosamples}$ fulfills the general assumptions $(A1)$ -- $(A3)$, too. Again, let $j \in \{1, \hdots, m\}$ be arbitrarily chosen. Under any parameter value $\vartheta\in\Theta$ it holds that $T_{j}(X)\sim t_{\tau_{j},n_{1,j}+n_{2,j}-2}$, where 
$\tau_{j}=\sqrt{\frac{n_{1,j}n_{2,j}}{n_{1,j}+n_{2,j}}}\theta_{j}(\vartheta)/\sigma_{j}$, and $t_{\tau,\nu}$ denotes the non-central $t$-distribution with non-centrality parameter $\tau$ and $\nu$ degrees of freedom. The family\\ $(t_{\tau,n_{1,j}+n_{2,j}-2})_{\tau\in\mathbb{R}}$ of distributions possesses the monotone likelihood ratio (MLR) property, i.\ e., it holds $t_{\tau_{1},n_{1,j}+n_{2,j}-2}\leq_{\mathrm{lr}}t_{\tau_{2},n_{1,j}+n_{2,j}-2}$ if and only if $\tau_{1}\leq\tau_{2}$; cf. \cite{karlin1956decision} and \cite{karlin1956distributions}. For a parameter value $\vartheta\in\Theta$ with $\theta_{j}(\vartheta)\in H_{j}$, i.\ e., $\theta_{1,j}(\vartheta)\leq\theta_{2,j}(\vartheta)$, it holds that $\tau_{j}=\sqrt{\frac{n_{1,j}n_{2,j}}{n_{1,j}+n_{2,j}}}\theta_{j}(\vartheta)/\sigma_{j}\leq 0$ and therefore $T_{j}(X)^{(\vartheta)}\leq_{\mathrm{lr}}T_{j}(X)^{(\vartheta_{0})}$, where $\vartheta_{0}$ is an LFC for $\varphi_{j}$, i.\ e., $\theta_{1,j}(\vartheta_{0})=\theta_{2,j}(\vartheta_{0})$. According to Theorem$~\ref{thm:randpvalidity}$, $p_{j}^{rand}(X,U_{j},c_{j})$ is valid for any choice of the constant $c_{j}\in[0,1]$ in this model.
\end{proof}

\subsection{A comparison between the LFC-based and the randomized $p$-values}
\label{sec:comparisonpvalues}
For any $1 \leq j \leq m$, we want to compare the cdfs of $p_{j}^{LFC}(X)$ and $p_{j}^{rand}(X,U_{j},c_{j})$.
Due to the discussion below \eqref{eq:randomizedpvalues}, this comparison is trivial for $c_j = 0$ and for $c_j = 1$, respectively. Therefore, let us assume here that $c_j$ is bounded away from zero and from one. For example, one may for the moment assume that $c_j = 0.5$ is chosen, for concreteness.

We first note that 
\begin{align}
\mathbb{P}_{\vartheta}(p_{j}^{LFC}(X)\leq t)=&\mathbb{P}_{\vartheta}(p_{j}^{LFC}(X)\leq t\mid p_{j}^{LFC}(X)>c_{j})\,\mathbb{P}_{\vartheta}(p_{j}^{LFC}(X)>c_{j})\nonumber\\
&+\mathbb{P}_{\vartheta}(p_{j}^{LFC}(X)\leq t,\,p_{j}^{LFC}(X)\leq c_{j}),\label{eq:comparison1}\\
\mathbb{P}_{\vartheta}(p_{j}^{rand}(X,U_{j},c_{j})\leq t)=&\mathbb{P}_{\vartheta}(U_{j}\leq t)\,\mathbb{P}_{\vartheta}(p_{j}^{LFC}(X)>c_{j})+\mathbb{P}_{\vartheta}(p_{j}^{LFC}(X)\leq t c_{j}).\label{eq:comparison2}
\end{align}
Now, if the value of the derived parameter $\theta_{j}(\vartheta)$ is so "deep inside" $H_{j}$ that $\mathbb{P}_{\vartheta}(p_{j}^{LFC}(X)> c_j)$ is large, then the first summands in $\eqref{eq:comparison1}$ and $\eqref{eq:comparison2}$ dominate the second ones, and we see that
\begin{equation*}
\mathbb{P}_{\vartheta}(p_{j}^{LFC}(X)\leq t\mid p_{j}^{LFC}(X)>c_{j})\leq \mathbb{P}_{\vartheta}(p_{j}^{LFC}(X)\leq t)\leq t=\mathbb{P}_{\vartheta}(U_{j}\leq t).
\end{equation*}
Thus, provided that $p_{j}^{rand}(X,U_{j},c_{j})$ is a valid $p$-value, its distribution under $H_j$ will typically be closer to $\mathrm{Uni}[0,1]$ than that of $p_{j}^{LFC}(X)$.

However, if $\vartheta$ is such that $K_{j}$ is true instead and that  $\mathbb{P}_{\vartheta}(p_{j}^{LFC}(X)\leq c_{j})$ is large, it holds that
\begin{align*}
\mathbb{P}_{\vartheta}(p_{j}^{LFC}(X)\leq t,\,p_{j}^{LFC}(X)\leq c_{j})&=\mathbb{P}_{\vartheta}(p_{j}^{LFC}(X)\leq\mathrm{min}(t,c_{j}))\\
&\geq\mathbb{P}_{\vartheta}(p_{j}^{LFC}(X)\leq t c_{j}).
\end{align*}
Thus, under $K_j$ the cdf of $p_{j}^{LFC}(X)$ will typically be pointwise larger than the cdf of $p_{j}^{rand}(X,U_{j},c_{j})$.

The former heuristic argumentation cannot be made mathematically rigorous in general. However, if condition $(3.)$ in Theorem$~\ref{thm:randpvalidity}$ is fulfilled, $p_{j}^{rand}$ does indeed always lie between $U_j$ and $p_{j}^{LFC}$ under the null hypothesis $H_{j}$, in the sense of the stochastic order. The same holds under the alternative $K_{j}$, if a condition similar to $(3.)$ is fulfilled in the case of 
$\theta_j(\vartheta) \in K_j$.

\begin{theorem}\label{thm:stochasticorder}
Let a model as in Section$~\ref{sec:modelsetup}$ be given and $j\in\{1,\ldots,m\}$ be fixed. 

If the cdf of $p_{j}^{LFC}(X)$ is convex under a fixed $\vartheta\in\Theta$, then 
\begin{equation*}
p_{j}^{rand}(X,U_{j},c_{j})^{(\vartheta)}\leq_{\mathrm{st}}p_{j}^{rand}(X,U_{j},\tilde{c}_{j})^{(\vartheta)}
\end{equation*}
for any $0\leq c_{j}\leq\tilde{c}_{j}\leq 1$.

If the cdf of $p_{j}^{LFC}(X)$ is concave under a fixed $\vartheta\in\Theta$, then it holds that
\begin{equation*}
p_{j}^{rand}(X,U_{j},\tilde{c}_{j})^{(\vartheta)}\leq_{\mathrm{st}}p_{j}^{rand}(X,U_{j},c_{j})^{(\vartheta)}
\end{equation*}
for any $0\leq c_{j}\leq\tilde{c}_{j}\leq 1$.
\end{theorem}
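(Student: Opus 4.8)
The plan is to compare the cdfs of $p_{j}^{rand}(X,U_{j},c_{j})$ and $p_{j}^{rand}(X,U_{j},\tilde c_{j})$ pointwise in $t\in[0,1]$, using the explicit formula \eqref{eq:comparison2} from the previous subsection. Writing $G(s)=\mathbb{P}_{\vartheta}(p_{j}^{LFC}(X)\leq s)$ for the (fixed-$\vartheta$) cdf of the LFC-based $p$-value, that formula reads
\begin{equation*}
\mathbb{P}_{\vartheta}(p_{j}^{rand}(X,U_{j},c_{j})\leq t)=t\,(1-G(c_{j}))+G(tc_{j})=:H_{c_{j}}(t).
\end{equation*}
So I need to show that, for fixed $t\in(0,1)$, the map $c\mapsto H_{c}(t)=t-tG(c)+G(tc)$ is non-increasing on $[0,1]$ when $G$ is convex, and non-decreasing when $G$ is concave; stochastic order of the randomized $p$-values then follows immediately from the definition of $\leq_{\mathrm{st}}$ (a larger cdf everywhere means stochastically smaller). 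The boundary cases $c=0$ (where $p_{j}^{rand}=U_{j}$ by convention, $H_{0}(t)=t$) and $c=1$ ($H_{1}(t)=G(t)$) are covered automatically by continuity, or can be checked directly.

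The key step is the monotonicity of $c\mapsto H_{c}(t)$. First I would handle the case where $G$ is differentiable: then $\partial_{c}H_{c}(t)=-t\,G'(c)+t\,G'(tc)=t\,(G'(tc)-G'(c))$. Since $0<t\le 1$ we have $tc\le c$, so if $G$ is convex $G'$ is non-decreasing and $G'(tc)-G'(c)\le 0$, giving $\partial_{c}H_{c}(t)\le 0$; if $G$ is concave the inequality reverses. To avoid assuming differentiability (a cdf is only known to be absolutely continuous here, but convexity of $G$ does give existence of $G'$ Lebesgue-a.e. and left/right derivatives everywhere), the clean route is to argue directly from the definition of convexity: for $0\le c\le\tilde c\le 1$, I want $H_{\tilde c}(t)\le H_{c}(t)$, i.e.
\begin{equation*}
G(t\tilde c)-G(tc)\le t\,\bigl(G(\tilde c)-G(c)\bigr).
\end{equation*}
Both $[tc,t\tilde c]$ and $[c,\tilde c]$ are intervals, the first obtained from the second by the scaling $s\mapsto ts$ around $0$ (more precisely $t\tilde c - t c = t(\tilde c - c)$, and $tc = t\cdot c + (1-t)\cdot 0$ while $t\tilde c = t\cdot\tilde c + (1-t)\cdot 0$). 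Convexity of $G$ makes the increment $G(b)-G(a)$ a non-decreasing function of the "position" of the interval $[a,b]$ for fixed length — more usable is the chord/slope formulation: for a convex $G$, the average slope $(G(b)-G(a))/(b-a)$ is non-decreasing in both $a$ and $b$. Applying this with $[a,b]=[tc,t\tilde c]$ versus $[c,\tilde c]$: since $tc\le c$ and $t\tilde c\le\tilde c$, the slope over $[tc,t\tilde c]$ is at most the slope over $[c,\tilde c]$, i.e. $(G(t\tilde c)-G(tc))/(t(\tilde c-c))\le (G(\tilde c)-G(c))/(\tilde c-c)$, which rearranges to exactly the displayed inequality. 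The concave case is the mirror image (reverse all inequalities, or apply the convex result to $-G$, noting $-G$ is not a cdf but the algebra is identical). There is a minor edge case $c=\tilde c$ (trivial) and $t=0$ or $t=1$ (where $H$ is constant in $c$ or equals $G$, also trivial).

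I expect the main obstacle — such as it is — to be purely expository rather than mathematical: namely making the "slope of a convex function is monotone in the endpoints" step precise without differentiability, and being careful that $G$ restricted to $[0,1]$ (the support of a $p$-value) is what is convex, so that all the points $tc,\,t\tilde c,\,c,\,\tilde c$ indeed lie in $[0,1]$ where convexity is assumed; this holds since $t,c,\tilde c\in[0,1]$. Once the monotonicity of $c\mapsto H_{c}(t)$ is in hand, translating "$H_{\tilde c}\le H_{c}$ pointwise" into "$p_{j}^{rand}(X,U_{j},c_{j})^{(\vartheta)}\leq_{\mathrm{st}}p_{j}^{rand}(X,U_{j},\tilde c_{j})^{(\vartheta)}$" is immediate from the standard characterization of the usual stochastic order via cdfs, and the concave case gives the reversed order in the same way.
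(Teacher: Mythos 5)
Your proposal is correct and follows essentially the same route as the paper: for fixed $t$ the paper likewise defines $q(c)=\mathbb{P}_{\vartheta}(p_{j}^{rand}(X,U_{j},c)\leq t)=t\,\mathbb{P}_{\vartheta}(p_{j}^{LFC}(X)>c)+\mathbb{P}_{\vartheta}(p_{j}^{LFC}(X)\leq ct)$ and shows $q'(c)=t\bigl(f_{\vartheta}(ct)-f_{\vartheta}(c)\bigr)$ has the appropriate sign because convexity (resp.\ concavity) of the cdf makes the density monotone. Your additional chord-slope argument merely removes the implicit differentiability assumption and is a sound, if minor, refinement rather than a different approach.
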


We give the proof of Theorem \ref{thm:stochasticorder} in the appendix.

\begin{remark} \label{rmk:thm2rmk1}
Let $j\in\{1,\ldots,m\}$ be fixed.
\begin{description}
\item[$1.$] If the $j$-th LFC-based $p$-value is given by $p_{j}^{LFC}(X)=1-F_{\vartheta_{0}}(T_{j}(X))$, where $\vartheta_{0}$ is an LFC for $\varphi_{j}$, then $p_{j}^{LFC}(X)$ has a convex cdf under $\vartheta\in\Theta$ if and only if $T_{j}(X)^{(\vartheta)}\leq_{\mathrm{lr}}T_{j}(X)^{(\vartheta_{0})}$, and a concave cdf under $\vartheta\in\Theta$ if and only if $T_{j}(X)^{(\vartheta_{0})}\leq_{\mathrm{lr}}T_{j}(X)^{(\vartheta)}$ (cf. the proof of Theorem$~\ref{thm:randpvalidity}$ in the appendix).
\item[$2.$] If condition $(3.)$ from Theorem$~\ref{thm:randpvalidity}$ is fulfilled, then Theorem$~\ref{thm:stochasticorder}$ implies
\begin{equation*}
U_j \leq_{\mathrm{st}}p_{j}^{rand}(X,U_{j},c_{j})^{(\vartheta)}\leq_{\mathrm{st}}p_{j}^{LFC}(X)^{(\vartheta)}
\end{equation*}  
for all $\vartheta\in\Theta$ with $\theta_{j}(\vartheta)\in H_{j}$ and any $c_{j}\in[0,1]$. This also implies the validity of $p_{j}^{rand}(X,U_{j},c_{j})$, as it was claimed in Theorem$~\ref{thm:randpvalidity}$.
\item[$3.$] The cdf of $p_{j}^{LFC}(X)$ can never be concave under $H_{j}$.
\end{description}
\end{remark}

\begin{corollary}\label{cor:orderinexamples}
For the multiple $Z$-tests model and the two-sample means comparison model from Examples$~\ref{ex:multipleztests}$ and $\ref{ex:twosamples}$, respectively, it holds for any $1 \leq j \leq m$  and any $c_{j}\in[0,1]$, that
\begin{equation*}
U_j\leq_{\mathrm{st}}p_{j}^{rand}(X,U_{j},c_{j})^{(\vartheta)}\leq_{\mathrm{st}}p_{j}^{LFC}(X)^{(\vartheta)}
\end{equation*}  
under any $\vartheta \in \Theta$ with $\theta_{j}(\vartheta)\in H_{j}$, as well as 
\begin{equation*}
p_{j}^{LFC}(X)^{(\vartheta)}\leq_{\mathrm{st}}p_{j}^{rand}(X,U_{j},c_{j})^{(\vartheta)}\leq_{\mathrm{st}}U_j
\end{equation*}  
under any $\vartheta$ with $\theta_{j}(\vartheta)\in K_{j}$. 
\end{corollary}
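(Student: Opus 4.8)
The plan is to derive this corollary directly from Theorem~\ref{thm:stochasticorder} combined with part~1 of Remark~\ref{rmk:thm2rmk1}. In both models the LFC-based $p$-value has the form $p_{j}^{LFC}(X)=1-F_{\vartheta_{0}}(T_{j}(X))$ with $\vartheta_{0}$ an LFC for $\varphi_{j}$, so Remark~\ref{rmk:thm2rmk1}.1 is applicable: the cdf of $p_{j}^{LFC}(X)$ is convex under $\vartheta$ exactly when $T_{j}(X)^{(\vartheta)}\leq_{\mathrm{lr}}T_{j}(X)^{(\vartheta_{0})}$, and concave under $\vartheta$ exactly when $T_{j}(X)^{(\vartheta_{0})}\leq_{\mathrm{lr}}T_{j}(X)^{(\vartheta)}$. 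Hence it suffices to establish the relevant likelihood ratio orderings and then invoke Theorem~\ref{thm:stochasticorder} twice for each fixed $\vartheta$: applying it to the pair $(0,c_{j})$ gives one inequality and applying it to the pair $(c_{j},1)$ gives the other, using the conventions $p_{j}^{rand}(X,U_{j},0)=U_{j}$ and, with probability one, $p_{j}^{rand}(X,U_{j},1)=p_{j}^{LFC}(X)$.

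For the likelihood ratio orderings I would reuse the computations already made in the proof of Corollary~\ref{lm:thm1lemma1}, but now in both directions. In the multiple $Z$-tests model, $T_{j}(X)\sim N(\vartheta_{j},n_{j}^{-1})$ and the location family $\{N(\mu,n_{j}^{-1}):\mu\in\mathbb{R}\}$ has the monotone likelihood ratio property in $\mu$, with the LFC value being $\mu=0$; thus $\theta_{j}(\vartheta)=\vartheta_{j}\leq 0$ yields $T_{j}(X)^{(\vartheta)}\leq_{\mathrm{lr}}T_{j}(X)^{(\vartheta_{0})}$ (convex cdf of $p_{j}^{LFC}$), while $\vartheta_{j}>0$ yields $T_{j}(X)^{(\vartheta_{0})}\leq_{\mathrm{lr}}T_{j}(X)^{(\vartheta)}$ (concave cdf). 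In the two-sample means comparison model, $T_{j}(X)\sim t_{\tau_{j},n_{1,j}+n_{2,j}-2}$ with $\tau_{j}=\sqrt{n_{1,j}n_{2,j}/(n_{1,j}+n_{2,j})}\,\theta_{j}(\vartheta)/\sigma_{j}$, the noncentral $t$-family $(t_{\tau,n_{1,j}+n_{2,j}-2})_{\tau\in\mathbb{R}}$ is MLR in $\tau$ (Karlin--Rubin), and the LFC corresponds to $\tau=0$; since $\sigma_{j}>0$, the sign of $\tau_{j}$ equals the sign of $\theta_{j}(\vartheta)$, so $\theta_{j}(\vartheta)\leq 0$ gives $T_{j}(X)^{(\vartheta)}\leq_{\mathrm{lr}}T_{j}(X)^{(\vartheta_{0})}$ and $\theta_{j}(\vartheta)>0$ gives $T_{j}(X)^{(\vartheta_{0})}\leq_{\mathrm{lr}}T_{j}(X)^{(\vartheta)}$.

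Putting the pieces together: under any $\vartheta$ with $\theta_{j}(\vartheta)\in H_{j}$ the cdf of $p_{j}^{LFC}(X)$ is convex, so Theorem~\ref{thm:stochasticorder} applied to $(0,c_{j})$ gives $U_{j}\leq_{\mathrm{st}}p_{j}^{rand}(X,U_{j},c_{j})^{(\vartheta)}$ and applied to $(c_{j},1)$ gives $p_{j}^{rand}(X,U_{j},c_{j})^{(\vartheta)}\leq_{\mathrm{st}}p_{j}^{LFC}(X)^{(\vartheta)}$, which is the first chain; under any $\vartheta$ with $\theta_{j}(\vartheta)\in K_{j}$ the cdf is concave, so the same theorem delivers the reversed order $p_{j}^{LFC}(X)^{(\vartheta)}\leq_{\mathrm{st}}p_{j}^{rand}(X,U_{j},c_{j})^{(\vartheta)}\leq_{\mathrm{st}}U_{j}$, which is the second chain. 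There is no serious obstacle here, since the substance is already contained in Theorems~\ref{thm:randpvalidity} and~\ref{thm:stochasticorder}; the only points needing care are that one must invoke the two-sided MLR property to obtain the likelihood ratio order in \emph{both} directions (under $H_{j}$ and under $K_{j}$), and that one must keep track of the endpoint conventions $c_{j}=0$ and $c_{j}=1$ so that $U_{j}$ and $p_{j}^{LFC}(X)$ appear as the extreme members of the family $\{p_{j}^{rand}(X,U_{j},c)\}_{c\in[0,1]}$.
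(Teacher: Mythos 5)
Your proposal is correct and follows essentially the same route the paper intends: the paper leaves this corollary without an explicit proof because it is exactly the combination of Theorem~\ref{thm:stochasticorder}, part~1 of Remark~\ref{rmk:thm2rmk1}, and the MLR computations already carried out in the proof of Corollary~\ref{lm:thm1lemma1} (with the sign of the location/non-centrality parameter flipped for the alternative case). Your handling of the endpoints $c_j=0$ and $c_j=1$ and of the two-sided likelihood ratio ordering is exactly what is needed.
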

We conclude this section by illustrating the assertions of Theorem \ref{thm:stochasticorder} and Corollary \ref{cor:orderinexamples} under the multiple $Z$-tests model. In Figures$~\ref{fig:cdfsnull}$ and $\ref{fig:cdfsalternative}$ we compare the cdfs of $p_{j}^{rand}(X,U_{j},c)$ for an arbitrary $j \in \{1, \hdots, m\}$ for $c=0,0.25,0.5,0.75$, and $1$ under $\vartheta\in\Theta$, where we set $\theta_{j}(\vartheta)=-1/\sqrt{n_{j}}$ or $\theta_{j}(\vartheta)=1/\sqrt{n_{j}}$ for $n_{j}=50$, respectively. It is apparent that the cdfs move from that of the $\mathrm{Uni}[0,1]$-distribution to the one of $p_{j}^{LFC}(X)$ with increasing $c$.

\section{Estimation of the proportion of true null hypotheses}
\label{sec:proportionoftruenullhypotheses}

\subsection{Some theoretical results}\label{sec:pi0theory}
We consider the usage of $\{p_{j}^{rand}(X,U_{j},c_{j})\}_{1 \leq j \leq m}$ in the Schweder-Spj\o tvoll estimator $\hat{\pi}_{0} \equiv \hat{\pi}_{0}(\lambda)$ defined in \eqref{eq:schwederspjotvoll}. It can easily be seen from the representation on the right-hand side of \eqref{eq:schwederspjotvoll}, that the bias $\mathbb{E}_{\vartheta}[\hat{\pi}_{0}(\lambda)]-\pi_{0}$ of $\hat{\pi}_{0}(\lambda)$ decreases if $\mathbb{E}_{\vartheta}[\hat{F}_{m}(\lambda)]$ increases, under any $\vartheta \in \Theta$. Thus, in terms of bias reduction of $\hat{\pi}_{0}(\lambda)$ (for a fixed, given value of $\lambda$) stochastically small (randomized) $p$-values (with pointwise large cdfs) are most suitable. In order to avoid a negative bias of $\hat{\pi}_{0}(\lambda)$, we furthermore have to ensure validity of the $p$-values utilized in $\hat{\pi}_{0}(\lambda)$. Hence, if the cdfs of the LFC-based $p$-values are convex under null hypotheses and concave under alternatives, the optimal ("oracle") value of $c_{j}$ is zero whenever $H_{j}$ is true and one whenever $K_{j}$ is true; cf. Theorem \ref{thm:stochasticorder}. This is also in line with Remark 6 of \cite{dickhaus2012how}, who showed that $\hat{\pi}_{0}(\lambda)$ is unbiased if the $p$-values utilized in $\hat{\pi}_{0}(\lambda)$ are $\mathrm{Uni}[0,1]$-distributed under true null hypotheses and almost surely smaller than $\lambda$ under false null hypotheses. Under the restriction of identical $c_{j}$'s, i.\ e., $c_1 = c_2 = \hdots = c_m \equiv c$, one may expect that an optimal ("oracle") value of $c$ (leading to a small, but non-negative bias of $\hat{\pi}_{0}(\lambda)$) should be close to $1 - \pi_0$. The latter restriction will be made throughout the remainder for computational convenience and feasibility. 

\begin{definition}
The Schweder-Spj\o tvoll estimator $\hat{\pi}_{0}(\lambda)$, if used with $p_{1}^{rand}(X,U_{1},c)$, 
$\ldots,p_{m}^{rand}(X,U_{m},c)$, will be denoted by $\hat{\pi}_{0}(\lambda,c)$ throughout the remainder. Notice, that in the estimators $\hat{\pi}_{0}(\lambda,0)$ and $\hat{\pi}_{0}(\lambda,1)$, respectively, we use $U_{1},\ldots,U_{m}$ (as the marginal $p$-values) and $p_{1}^{LFC}(X),\ldots,p_{m}^{LFC}(X)$, respectively. Furthermore, we consider the function $h_{\vartheta}:[0,1)\times[0,1]\rightarrow\mathbb{R}$, given by $h_{\vartheta}(\lambda,c)=\mathbb{E}_{\vartheta}[\hat{\pi}_{0}(\lambda,c)]$, where 
$\vartheta\in\Theta$ is the underlying parameter value.
\end{definition}

\begin{lemma}
For every $\lambda\in[0,1)$ and under any $\vartheta \in \Theta$, $h_{\vartheta}(\lambda,0)=1$. If the cdfs of the 
$p_{1}^{LFC}(X),\ldots,p_{m}^{LFC}(X)$ are continuous under $\vartheta$, then there exists a minimizing argument $c^{\star}\in[0,1]$ of $h_{\vartheta}(\lambda,\cdot)$. 
\end{lemma}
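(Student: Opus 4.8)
The plan is to establish the two assertions separately. For the first claim, that $h_{\vartheta}(\lambda,0)=1$, I would simply substitute $c=0$ into the definition. By convention $p_{j}^{rand}(X,U_{j},0)=U_{j}$, so $\hat{\pi}_{0}(\lambda,0)=(1-\hat{F}_{m}(\lambda))/(1-\lambda)$, where $\hat{F}_{m}$ is now the ecdf of $m$ i.i.d.\ $\mathrm{Uni}[0,1]$ random variables. Taking expectations and using $\mathbb{E}_{\vartheta}[\mathbf{1}\{U_{j}\leq\lambda\}]=\lambda$ for each $j$ gives $\mathbb{E}_{\vartheta}[\hat{F}_{m}(\lambda)]=\lambda$, hence $h_{\vartheta}(\lambda,0)=(1-\lambda)/(1-\lambda)=1$. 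This is immediate and requires no further work.

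For the second claim, the natural approach is a compactness argument: $[0,1]$ is compact, so it suffices to show that $c\mapsto h_{\vartheta}(\lambda,c)$ is continuous on $[0,1]$, and then invoke the extreme value theorem to obtain a minimizer $c^{\star}$. Writing things out, $h_{\vartheta}(\lambda,c)=\frac{1}{1-\lambda}\bigl(1-\frac{1}{m}\sum_{j=1}^{m}\mathbb{P}_{\vartheta}(p_{j}^{rand}(X,U_{j},c)\leq\lambda)\bigr)$, so continuity of $h_{\vartheta}(\lambda,\cdot)$ reduces to continuity in $c$ of each map $c\mapsto\mathbb{P}_{\vartheta}(p_{j}^{rand}(X,U_{j},c)\leq\lambda)$. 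Using \eqref{eq:comparison2}, this probability equals $\lambda\,\mathbb{P}_{\vartheta}(p_{j}^{LFC}(X)>c)+\mathbb{P}_{\vartheta}(p_{j}^{LFC}(X)\leq\lambda c)$ for $c\in(0,1]$; denoting by $G_{j}$ the cdf of $p_{j}^{LFC}(X)$ under $\vartheta$, this is $\lambda(1-G_{j}(c))+G_{j}(\lambda c)$. Continuity of $G_{j}$ — which is exactly the hypothesis of the lemma — makes this expression continuous in $c$ on $(0,1]$, and continuity at $c=0$ follows since $G_{j}(c)\to G_{j}(0^{+})=0$ (as $p_{j}^{LFC}$ is a $[0,1]$-valued random variable with continuous cdf, so $G_{j}(0)=0$) and $G_{j}(\lambda c)\to 0$, giving the limit $\lambda$, which matches the convention value $\mathbb{P}_{\vartheta}(U_{j}\leq\lambda)=\lambda$. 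Averaging over $j$ and dividing by $1-\lambda$ preserves continuity.

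The main obstacle, such as it is, is purely a matter of bookkeeping at the boundary point $c=0$: one must check that the formula from \eqref{eq:comparison2} (valid for $c>0$) connects continuously to the convention $p_{j}^{rand}(X,U_{j},0)=U_{j}$, and this is where the continuity of the cdf of $p_{j}^{LFC}(X)$ is genuinely used — without it, $G_{j}$ could have a jump at $0$ and $h_{\vartheta}(\lambda,\cdot)$ would fail to be right-continuous at $0$, so that a minimizer need not exist in $[0,1]$ (the infimum could be approached only as $c\downarrow 0$). Once continuity on the compact interval $[0,1]$ is in hand, existence of $c^{\star}$ is immediate from the extreme value theorem. I would also remark that this argument does not assert uniqueness of $c^{\star}$, only existence, which is all the lemma claims.
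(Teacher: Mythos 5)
Your proposal is correct and follows essentially the same route as the paper: the first claim by direct substitution of $c=0$, and the second by observing that $\mathbb{E}_{\vartheta}[\hat{F}_{m}(\lambda)]$, written via the decomposition $\lambda\,\mathbb{P}_{\vartheta}(p_{j}^{LFC}(X)\geq c)+\mathbb{P}_{\vartheta}(p_{j}^{LFC}(X)\leq \lambda c)$, is continuous in $c$ under the continuity hypothesis, so the extreme value theorem on the compact set $[0,1]$ yields a minimizer. Your additional check that the formula connects continuously to the convention $p_{j}^{rand}(X,U_{j},0)=U_{j}$ at $c=0$ is a point the paper passes over silently, and it is a worthwhile (and correct) refinement rather than a deviation.
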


\begin{proof}
In the case of $c=0$, $p_{j}^{rand}(X,U_{j},0)=U_{j}$ for each $j \in \{1,\hdots,m\}$, and $\mathbb{E}_{\vartheta}[\hat{\pi}_{0}(\lambda,0)]=(1-\lambda)/(1-\lambda)=1$, proving the first assertion. 

In order to show the second assertion, we note that under any $\vartheta \in \Theta$
\begin{equation}\label{eq:EFm}
\mathbb{E}_{\vartheta}[\hat{F}_{m}(\lambda)]=\sum_{j=1}^{m}\left[\lambda\,\mathbb{P}_{\vartheta}(p_{j}^{LFC}(X)\geq c)+\mathbb{P}_{\vartheta}(p_{j}^{LFC}(X)\leq c \lambda)\right].
\end{equation}
The right-hand side of \eqref{eq:EFm} is continuous in $c$ if the cdfs of the $p$-values $p_{1}^{LFC}(X)$,
$\ldots,p_{m}^{LFC}(X)$ are continuous under $\vartheta$. Since $[0,1]$ is a compact set, the function $h_{\vartheta}(\lambda,\cdot)$ attains a minimum on $[0,1]$, by the extreme value theorem.
\end{proof}

For an illustration, let us consider the multiple $Z$-tests model from Example$~\ref{ex:multipleztests}$, where we set the total number of null hypotheses to $m = 1{,}000$ and the sample sizes to $n_{j}=50$ for all $j=1,\ldots,m$. As mentioned before, the choice of $c=1/2$ leads to the randomized $p$-values as defined in \cite{dickhaus2013randomized} for this model. Figures 
\ref{fig:Epi0example1} and \ref{fig:Epi0example3} display the graphs of the function $c\mapsto h_{\vartheta}(1/2, c)$ for two different parameter values $\vartheta\in\Theta$ under this model. In both cases, 
$\pi_{0}=0.7$ (meaning that $700$ null hypotheses are true and $300$ are false) and $\theta_{j}(\vartheta)=2.5/\sqrt{50}$ whenever $H_j$ is false.

In Figure \ref{fig:Epi0example1}, $\theta_{j}(\vartheta)=-1/\sqrt{50}$ whenever $H_{j}$ is true. The minimum of $c\mapsto h_{\vartheta}(1/2, c)$ is attained at $c^{\star}=0.3276$ and yields $\mathbb{E}_{\vartheta}[\hat{\pi}_{0}(1/2,c^{\star})]=0.7508$. It is apparent that $h_{\vartheta}(1/2,c)$ is largest for $c=1$, that is, when utilizing the LFC-based $p$-values $\{p_{j}^{LFC}(X)\}_{1 \leq j \leq m}$. Furthermore, Figure \ref{fig:Epi0example1} graphically confirms, that $c^{\star}$ (indicated by the dashed vertical line) is close to $1 - \pi_0 = 0.3$ (indicated by the solid vertical line), as mentioned previously. Finally, we see that the optimal bias of $\hat{\pi}_{0}(1/2)$ when using the same $c_j \equiv c$ for all $1 \leq j \leq m$ is larger than zero (compare the dashed and the dotted horizontal lines).

In Figure \ref{fig:Epi0example3}, $\theta_{j}(\theta)=0$ whenever $H_{j}$ is true. In this case, the estimator $\hat{\pi}_{0}(1/2,1)$ has the lowest bias among all estimators $\{\hat{\pi}_{0}(1/2,c): c\in[0,1]\}$, meaning that $c^{\star}=1$. This is because for every $j$ with $\theta_{j}(\vartheta)\in H_{j}$, $\vartheta$ is an LFC for $\varphi_{j}$ and thus $p_{j}^{LFC}(X)$ is $\mathrm{Uni}[0,1]$-distributed under $\vartheta$. In such cases, $p_{j}^{rand}(X,U_{j},c)$ is $\mathrm{Uni}[0,1]$-distributed for any $c$ under $H_j$, while $p_{j}^{LFC}(X)^{(\vartheta)}\leq_{\mathrm{st}}p_{j}^{rand}(X,U_{j},c)^{(\vartheta)}$ if $K_{j}$ is true, due to Theorem \ref{thm:stochasticorder}.

From a decision-theoretic perspective, the bias alone is not enough to judge the estimation quality of 
$\hat{\pi}_{0}$. A more commonly used criterion for the quality of an estimator is its mean squared error (MSE), which equals the squared bias plus the variance of the estimator under consideration. 
Therefore, we now additionally discuss the variance of $\hat{\pi}_{0}$ when employing our proposed randomized $p$-values. As is apparent from the right-hand side of \eqref{eq:EFm}, $\mathbb{E}_{\vartheta}[\hat{F}_{m}(\lambda)]$ only depends on the marginal distributions of $p_{1}^{LFC}(X),\ldots,p_{m}^{LFC}(X)$, but not on their dependency structure (i.\ e., their copula). Consequently, also $\mathbb{E}_{\vartheta}[\hat{\pi}_{0}]$ and $h_{\vartheta}$ do not depend on that copula. 
However, the variance of $\hat{\pi}_{0}$ does depend on the dependency structure among the utilized $p$-values. In particular, in prior work (see \cite{neumann2017estimating}) it has been shown that a high degree of positive dependency among the utilized $p$-values entails a large variance of $\hat{\pi}_{0}$. 

Figures \ref{fig:varianceidandd} and \ref{fig:MSEidandd} illustrate the effect of the copula of the $p$-values 
utilized in $\hat{\pi}_{0}$ on its variance and its MSE, respectively, in our context. In both figures, we used the same model and parameter settings as for Figure \ref{fig:Epi0example1}. However, while the graph displayed in Figure \ref{fig:Epi0example1} originated from exact analytical calculations, Figures \ref{fig:varianceidandd} and \ref{fig:MSEidandd} display the results of Monte Carlo simulations with $100,000$ repetitions. The left graphs of Figures \ref{fig:varianceidandd} and \ref{fig:MSEidandd} refer to the situation in which $p_{1}^{LFC}(X),\ldots,p_{m}^{LFC}(X)$ are jointly stochastically independent random variables under $\vartheta$ (meaning that their copula under $\vartheta$ is the product copula), while the dependency structure among $p_{1}^{LFC}(X),\ldots,p_{m}^{LFC}(X)$ under $\vartheta$ is given by the Gumbel-Hougaard copula with copula parameter $\nu=2$ in the right graphs of Figures \ref{fig:varianceidandd} and \ref{fig:MSEidandd}. Variance (Figure \ref{fig:varianceidandd}) and MSE (Figure \ref{fig:MSEidandd}) of $\hat{\pi}_{0}(1/2,c)$ are displayed as a function of $c$, for $c=0,0.05,\ldots,1$. 

In the left graph of Figure \ref{fig:varianceidandd}, the variance of $\hat{\pi}_{0}(1/2,c)$ is decreasing in $c$. This can be explained by the fact, that in the case of jointly stochastically independent LFC-based $p$-values, any randomization (i.\ e., any choice of $c < 1$) means that additional random components contributed by $U_1, \hdots, U_m$ enter the variance of $\hat{\pi}_{0}(1/2,c)$. However, as the scaling of the vertical axis in the left graph of Figure \ref{fig:varianceidandd} reveals, this (increased) variance is in essentially all considered cases smaller than the squared bias of $\hat{\pi}_{0}(1/2,c)$; cf.\ Figure \ref{fig:Epi0example1}. So, we may conclude here that taking into account $U_1, \hdots, U_m$ increases the variance of $\hat{\pi}_{0}$, but only to a magnitude which is in essentially all considered cases smaller than that of the bias reduction achieved by randomization. This is also in line with the findings of \cite{dickhaus2013randomized}; see the discussion around Table 2 in that paper.  

In the right graph of Figure \ref{fig:varianceidandd}, the behavior of the variance of $\hat{\pi}_{0}(1/2,c)$ 
is different. Here, the randomization reduces the variance of $\hat{\pi}_{0}$, often by a considerable amount. This can be explained by the fact, that in the dependency structure among $p_{1}^{rand}(X,U_{1},c), \hdots, p_{m}^{rand}(X,U_{m},c)$ the Gumbel-Hougaard copula of $p_{1}^{LFC}(X),\ldots,p_{m}^{LFC}(X)$ and the product copula of $U_1, \hdots, U_m$ are "mixed", meaning that the degree of dependency among $p_{1}^{rand}(X,U_{1},c), \hdots, p_{m}^{rand}(X,U_{m},c)$ is smaller than that among $p_{1}^{LFC}(X),\ldots,p_{m}^{LFC}(X)$. 

Furthermore, comparing the scalings of the vertical axes in the two graphs of  Figure \ref{fig:varianceidandd}, we can confirm the previous findings by \cite{neumann2017estimating} (and other authors), that (positively) dependent $p$-values lead to an increased variance of  $\hat{\pi}_{0}$ when compared with the case of jointly stochastically independent $p$-values. As can be seen from the representation on the right-hand side of \eqref{eq:schwederspjotvoll}, the variance of $\hat{\pi}_{0}$ is  essentially a re-scaled version of the variance of  $\hat{F}_{m}$. Finally, Figure \ref{fig:MSEidandd} demonstrates that the (squared) bias is the dominating part in the bias-variance decomposition of $\hat{\pi}_{0}(1/2,c)$. In particular, the shapes of the curves in 
Figure \ref{fig:MSEidandd} closely resemble the shape of the curve in Figure \ref{fig:Epi0example1}. Our conclusion is, that choosing $c = c^{\star}$ does not only minimize the bias of $\hat{\pi}_{0}(1/2,c)$, but also leads to a small MSE of $\hat{\pi}_{0}(1/2,c)$.


\subsection{Estimating $\pi_{0}$ in practice}\label{sec:practice}

The expected value in $h_{\vartheta}(\lambda,c)=\mathbb{E}_{\vartheta}[\hat{\pi}_{0}(\lambda,c)]$ discussed in Section \ref{sec:pi0theory} refers to the joint distribution of $\{U_{j}\}_{1 \leq j \leq m}$ and the data $X$ under $\vartheta$. In practice, the distribution of $X$ under $\vartheta$ is unknown, but we have a realized data sample $X = x\in\Omega$ at hand, from which $p_{1}^{LFC}(x),\ldots,p_{m}^{LFC}(x)$ can be computed. Throughout this section, let us assume a statistical model such that any of the conditions $(2.)$ -- $(5.)$ from Theorem \ref{thm:randpvalidity} is fulfilled, so that $p_{1}^{rand}(X,U_{1},c),\ldots,$ $p_{m}^{rand}(X,U_{m},c)$ are valid $p$-values for any $c\in[0,1]$.

In analogy to \eqref{eq:EFm}, we obtain that the conditional expected value (with respect to the $U_{j}$'s) of $\hat{\pi}_{0}(\lambda,c)$ under the condition $X = x$ is given by
\begin{equation}\label{eq:expectedvaluepi0hat}
\mathbb{E}[\hat{\pi}_{0}(\lambda,c)\mid X=x]=\frac{1}{1-\lambda}\left[1-\frac{1}{m}\sum_{j=1}^{m}\Big[\lambda\,\textbf{1}\{p_{j}^{LFC}(x)\geq c\}+\textbf{1}\{p_{j}^{LFC}(x)\leq\lambda c\}\Big]\right].
\end{equation}
Our proposal for practical purposes is to minimize \eqref{eq:expectedvaluepi0hat} with respect to $c \in [0, 1]$, for fixed $\lambda \in [0, 1)$. Denoting the solution of this minimization problem by $c_0$, we then propose to utilize $p_{1}^{rand}(x,U_{1},c_0),\ldots,p_{m}^{rand}(x,U_{m},c_0)$ in $\hat{\pi}_0(\lambda)$.

Minimizing \eqref{eq:expectedvaluepi0hat} with respect to $c \in [0, 1]$ is equivalent to maximizing the function $c \mapsto g_x(\lambda, c)$, given by
\begin{equation}
\label{eq:glambda}
g_x(\lambda, c) =\sum_{j=1}^{m}\big(\lambda\,\textbf{1}\{p_{j}^{LFC}(x)\geq c\}+\textbf{1}\{p_{j}^{LFC}(x)\leq\lambda c\}\big),
\end{equation}
with respect to $c \in [0, 1]$. Hence, the solution $c_{0}$ is such, that most of the (realized) LFC-based $p$-values are outside of the interval $(\lambda c_{0},c_{0})$. An optimal choice $c_{0}$ can be determined numerically by either evaluating $g_x(\lambda, \cdot)$ on a given grid $0=c_{0}<\cdots<c_{N}=1$ or on the set $\{p_{1}^{LFC}(x),\ldots,p_{m}^{LFC}(x),$ $p_{1}^{LFC}(x)/\lambda,\ldots,p_{m}^{LFC}(x)/\lambda\}$ (excluding values larger than $1$). Notice, that $g_x(\lambda, \cdot)$ can only change its values at points from the second set.

We demonstrate this procedure with an example. Again, consider the multiple $Z$-tests model and the same parameter setting as for deriving the left graph in Figure \ref{fig:varianceidandd}. Under these settings, we randomly drew one sample $x\in\Omega$ and applied the proposed procedure with $\lambda=1/2$. After the removal of elements exceeding one from the set $\{p_{1}^{LFC}(x),\ldots,p_{m}^{LFC}(x), 2 p_{1}^{LFC}(x),\ldots, 2 p_{m}^{LFC}(x)\}$, $1,406$ relevant points remained for the evaluation of $g_x(1/2, \cdot)$. As displayed in 
Figure \ref{fig:gplot}, the maximum of $g_x(1/2, \cdot)$ is for the observed $x$ attained at $c_{0} = 0.3286$. This is an optimal $c$ given the realized values $p_{1}^{LFC}(x),\ldots,p_{m}^{LFC}(x)$. For comparison, recall that we have seen in Section \ref{sec:pi0theory} that $c^{\star}=0.3276$ minimizes the bias of $\hat{\pi}_{0}(1/2, c)$ on average over $X \sim \mathbb{P}_\vartheta$.

Figure \ref{fig:ecdfswithcstar} displays the ecdfs pertaining to $p_{1}^{LFC}(x),\ldots,p_{m}^{LFC}(x)$ and\\ $p_{1}^{rand}(x,u_{1},c_{0}),\ldots,p_{m}^{rand}(x,u_{m},c_{0})$, respectively, where $\{u_{1},\ldots,u_{m}\}$
is one particular set of realizations of the random variables $U_{1},\ldots,U_{m}$. 
Furthermore, the two dotted vertical lines in Figure \ref{fig:ecdfswithcstar} indicate the interval $[c_0/2, c_0]$. Recall that $c_0$ is chosen such, that most of the (realized) LFC-based $p$-values are outside of the latter interval. This can visually be confirmed, since the ecdf pertaining to $p_{1}^{LFC}(x),\ldots,p_{m}^{LFC}(x)$ is rather flat on $[c_0/2, c_0]$.

For any ecdf $t \mapsto \hat{F}_m(t)$ utilized in $\hat{\pi}_0(\lambda)$, the offset at $t=0$ of the straight line connecting the points $(1,1)$ and $(\lambda,\hat{F}_{m}(\lambda))$ equals $1 -\hat{\pi}_0(\lambda)$; cf., e.\ g., Figure 3.2.(b) in \cite{Dickhaus-Buch2014}.
We therefore obtain an accurate estimate of $\pi_0$ if the ecdf $t \mapsto \hat{F}_{m}(t)$ utilized in $\hat{\pi}_0(\lambda)$ is at $t=\lambda$ close to the straight line connecting the points $(1,1)$ and $(0,1-\pi_{0})$. The latter "optimal" line is the expected ecdf of marginal $p$-values that are $\mathrm{Uni}[0,1]$-distributed under the null and almost surely equal to zero under the alternative. In Figure \ref{fig:ecdfswithcstar}, the ecdf pertaining to $p_{1}^{rand}(x,u_{1},c_{0}),\ldots,p_{m}^{rand}(x,u_{m},c_{0})$ is much closer to that optimal line than the ecdf pertaining to $p_{1}^{LFC}(x),\ldots,p_{m}^{LFC}(x)$. Consequently, for this particular dataset the estimation approach based on $p_{1}^{rand}(x,u_{1},c_{0}),\ldots,p_{m}^{rand}(x,u_{m},c_{0})$ leads to a much more precise estimate of $\pi_0 = 0.7$ then the one based on 
$p_{1}^{LFC}(x),\ldots,p_{m}^{LFC}(x)$. The estimate based on $p_{1}^{LFC}(x),\ldots,p_{m}^{LFC}(x)$ even exceeds one in this example. We have repeated this simulation several times (results not included here) and the conclusions have always been rather similar.

\section{Discussion}
\label{sec:discussion}
We have demonstrated how randomized $p$-values can be utilized in the Schweder-Spj\o tvoll estimator $\hat{\pi}_0$. Whenever composite null hypotheses are under consideration, our proposed approach leads to a reduction of the bias and of the MSE of $\hat{\pi}_0$, when compared to the usage of LFC-based $p$-values. Furthermore, our approach also robustifies $\hat{\pi}_0$ against dependencies among $p_{1}^{LFC}(X),\ldots,$ $p_{m}^{LFC}(X)$. The latter property is important in modern high-dimensional applications, where the biological and/or technological mechanisms involved in the data-generating process virtually always lead to dependencies (cf. \cite{MADAM}), especially in studies with multiple endpoints which are all measured for the same observational units. Furthermore, we have explained in detail how the proposed methodology can be applied in practice. Worksheets in \verb=R=, with which all results of the present work can be reproduced, are available from the first author upon request. 

Statistical models that fulfill any of the conditions $(2.)$ -- $(5.)$ from Theorem \ref{thm:randpvalidity} admit valid randomized $p$-values $\{p_{j}^{rand}(X,U_{j},c_{j})\}_{1 \leq j \leq m}$ for any choice of the constants $(c_{j})_{1 \leq j \leq m} \in[0,1]^m$. We gave two such models in Examples \ref{ex:multipleztests} and \ref{ex:twosamples}. These models have a variety of applications, for instance in the life sciences; cf., e.\ g., Part II of \cite{Dickhaus-Buch2014}. Closely related examples are the replicability models considered in \cite{hoang2019randomized}. Identifying additional model classes that have that property can be addressed in future research. Furthermore, in models for which the $j$-th LFC-based $p$-value is of the form  $p_{j}^{LFC}(X)=1-F_{\vartheta_{0}}(T_{j}(X))$ for $1 \leq j \leq m$ and in which $(T_{j}(X)^{(\vartheta)})_{\theta_{j}(\vartheta)}$ is an MLR family, the cdf of $p_{j}^{rand}(X,U_{j},R_{j})$ is always between those of $\mathrm{Uni}[0,1]$ and $p_{j}^{LFC}(X)$. Distributions with the MLR property include exponential families, for example the family of univariate normal distributions with fixed variance and the family of Gamma distributions (cf. \cite{karlin1956distributions}). Also, the family of non-central $t$-distributions and the family of non-central 
$F$-distributions have the MLR property with respect to their non-centrality parameters (cf. \cite{karlin1956decision}). It is of interest to deeper investigate properties of our randomized $p$-values in such models.

There are several further possible extensions of the present work. First, in Section \ref{sec:proportionoftruenullhypotheses} we only considered the usage of $p_{1}^{rand}(X,U_{1},c_{1}),\ldots,p_{m}^{rand}(X,U_{m},c_{m})$ in $\hat{\pi}_{0}$ for identical constants $c_{1}=\cdots=c_{m} \equiv c$. In future work, it may be of interest to develop a method for choosing each $c_{j}$ individually, for instance depending on the size of the $j$-th LFC-based $p$-value. 
Second, we have chosen $c_0$ in Section \ref{sec:practice} such, that the conditional (to the observed data $X = x$) bias of $\hat{\pi}_{0}(\lambda)$ is minimized. Another approach, which can be pursued in future research, is to choose a $c_{0}$ that minimizes the MSE of $\hat{\pi}_{0}(\lambda)$ instead.
Third, we restricted our attention to the Schweder-Spj\o tvoll estimator $\hat{\pi}_{0}(\lambda)$. However, there exists a wide variety of other ecdf-based estimators in the literature (see, for instance, Table 1 in \cite{Chen2019} for a recent overview), which are prone to suffer from the same issues as $\hat{\pi}_{0}(\lambda)$ when used with LFC-based $p$-values in the context of composite null hypotheses. One other ecdf-based estimator for $\pi_{0}$ is the more conservative estimator $\hat{\pi}_{0}^{+}(\lambda)=\hat{\pi}_{0}(\lambda)+1/(m(1-\lambda))$ proposed by \cite{storey2002direct}. The bias of $\hat{\pi}_{0}^{+}$ when used with the randomized $p$-values $p_{1}^{rand}(X,U_{1},c),\ldots,p_{m}^{rand}(X,U_{m},c)$ is minimized for the same $c=c^{\star}$ from Section \ref{sec:proportionoftruenullhypotheses}. Thus, the same algorithm as outlined in Section \ref{sec:practice} can be applied to $\hat{\pi}_{0}^{+}$ in practice. In future research, randomization approaches for other ecdf-based estimators can be investigated.
Finally, we have not elaborated on the choice of $\lambda$ in the present work. The standard choice of $\lambda = 1/2$ seemed to work reasonably well in connection with our proposed randomized $p$-values. We have also performed some preliminary sensitivity analyses (not included here) with respect to $\lambda$, which indicated that the sensitivity of $\hat{\pi}_{0}$ with respect to $\lambda$ is less pronounced for the case of randomized  $p$-values than for the case of LFC-based $p$-values. Investigating this phenomenon deeper, both from the theoretical and from the numerical perspective,  is also a worthwhile topic for future research.


\section*{Appendix}
\label{sec:appendix}
\subsection*{The more general randomized $p$-values}
\label{sec:generaldefinitionprand}
\subsubsection*{Definition}
\label{def:generalprand}

Let $U_{1},\ldots,U_{m}$ and $X$ be as before. For a set of stochastically independent (not necessarily identically distributed) random variables $R_{1},\ldots,R_{m}$ with values in $[0,1]$, that are defined on the same probability space as $X$, stochastically independent of the $U_{j}$'s and the data $X$, and whose distributions do not depend on $\vartheta$, we define 
\begin{equation}\label{doublerand}
p_{j}^{rand}(X,U_{j},R_{j})=U_{j} \mathbf{1}\{p_{j}^{LFC}(X)\geq R_{j}\}+\frac{p_{j}^{LFC}(X)}{R_{j}}\mathbf{1}\{p_{j}^{LFC}(X)<R_{j}\},
\end{equation}
$j=1,\ldots,m$.
This definition includes the case $R_{j}\equiv c_{j}$ from Definition$~\ref{def:randomizedpvalues}$ for any constant $c_{j}\in[0,1]$, $j=1,\ldots,m$. We generalize and prove Theorems$~\ref{thm:randpvalidity}$ and $\ref{thm:stochasticorder}$ for the randomized $p$-values $\{p_{j}^{rand}(X,U_{j},R_{j})\}_{1 \leq j \leq m}$.

\subsubsection*{Theorem~\ref{thm:randpvalidity}\,$'$}
\label{thm:generalthmrandpvalidity}

Let a model as in Section$~\ref{sec:modelsetup}$ be given and $j\in\{1,\ldots,m\}$ be fixed. Then, the $j$-th randomized $p$-value $p_{j}^{rand}(X,U_{j},R_{j})$ as in \eqref{doublerand} is a valid $p$-value for a given random variable $R_{j}$ with values in $[0,1]$ if and only if condition $(0.)$ is fulfilled. Furthermore, either of the following conditions $(1.')$, $(2.)$, and $(3.)$ is a sufficient condition for the validity of $p_{j}^{rand}(X,U_{j},R_{j})$ for any random variable $R_{j}$ with values in $[0,1]$.
\begin{description}

\item[$(0.)$] For every $\vartheta\in\Theta$ with $\theta_{j}(\vartheta)\in H_{j}$, it holds
\begin{equation*}
\mathbb{P}_{\vartheta}(p_{j}^{LFC}(X)\leq t R_{j})\leq t \mathbb{P}_{\vartheta}(p_{j}^{LFC}(X)\leq R_{j})
\end{equation*}
for all $t\in[0,1]$.

\item[$(1.')$] For every $\vartheta\in\Theta$ with $\theta_{j}(\vartheta)\in H_{j}$, it holds
\begin{equation*}
\mathbb{P}_{\vartheta}(p_{j}^{LFC}(X)\leq t u)\leq t \mathbb{P}_{\vartheta}(p_{j}^{LFC}(X)\leq u)
\end{equation*}
for all $u,t\in[0,1]$.

\item[$(2.)$] For every $\vartheta\in\Theta$ with $\theta_{j}(\vartheta)\in H_{j}$, $\mathbb{P}_{\vartheta}(p_{j}^{LFC}(X)\leq t)/t$ is non-decreasing in $t$.

\item[$(3.)$] The cdf of $p_{j}^{LFC}(X)$ is convex under any 
$\vartheta\in\Theta$ with $\theta_{j}(\vartheta)\in H_{j}$.

\end{description}

Let $F_{\vartheta}$ be the cdf of $T_{j}(X)$ under $\vartheta\in\Theta$. If the LFC-based $p$-value is given by $p_{j}^{LFC}(X)=1-F_{\vartheta_{0}}(T_{j}(X))$, where $\vartheta_{0}\in\Theta$ is an LFC for $\varphi_{j}$, then the following condition $(4.)$ is equivalent to condition $(2.)$, while condition $(5.)$ is equivalent to condition $(3.)$.

\begin{description}

\item[$(4.)$] For every $\vartheta\in\Theta$ with $\theta_{j}(\vartheta)\in H_{j}$, it holds $T_{j}(X)^{(\vartheta)}\leq_{\mathrm{hr}}T_{j}(X)^{(\vartheta_{0})}$.

\item[$(5.)$] For every $\vartheta\in\Theta$ with $\theta_{j}(\vartheta)\in H_{j}$, it holds $T_{j}(X)^{(\vartheta)}\leq_{\mathrm{lr}}T_{j}(X)^{(\vartheta_{0})}$.

\end{description}

\begin{proof}
First, we show that condition $(0.)$ is equivalent to $p_{j}^{rand}(X,U_{j},R_{j})$ being valid. For $p_{j}^{rand}(X,U_{j},R_{j})$ to be valid it has to hold that
\begin{equation*}
\mathbb{P}_{\vartheta}(p_{j}^{rand}(X,U_{j},R_{j})\leq t)\leq t,
\end{equation*}
for all $t\in[0,1]$ and all $\vartheta\in\Theta$ with $\theta_{j}(\vartheta)\in H_{j}$. It holds that
\begin{eqnarray}
\mathbb{P}_{\vartheta}(p_{j}^{rand}(X,U_{j},R_{j})\leq t)&= &\mathbb{P}_{\vartheta}(U_{j}\leq t) \mathbb{P}_{\vartheta}(p_{j}^{LFC}(X)>R_{j})+\mathbb{P}_{\vartheta}(p_{j}^{LFC}(X)\leq t R_{j})\nonumber \\ 
&= &t \mathbb{P}_{\vartheta}(p_{j}^{LFC}(X)>R_{j})+\mathbb{P}_{\vartheta}(p_{j}^{LFC}(X)\leq t R_{j}).\label{eq:thm1eq1}
\end{eqnarray}
Now, the term in \eqref{eq:thm1eq1} is not larger than $t$ if and only if it holds
\begin{equation*}
\mathbb{P}_{\vartheta}(p_{j}^{LFC}(X)\leq t R_{j})\leq t\big[1-\mathbb{P}_{\vartheta}(p_{j}^{LFC}(X)>R_{j})\big]=t \mathbb{P}_{\vartheta}(p_{j}^{LFC}(X)\leq R_{j}),
\end{equation*}
which is condition $(0.)$.

Let $G$ be the cdf of $R_{j}$. From condition $(1.')$ it follows
\begin{equation*}
\int_{0}^{1}\mathbb{P}_{\vartheta}(p_{j}^{LFC}(X)\leq t u) \mathrm{d} G(u)\leq t \int_{0}^{1}\mathbb{P}_{\vartheta}(p_{j}^{LFC}(X)\leq u) \mathrm{d} G(u),
\end{equation*}
for every $t\in[0,1]$, thus condition $(1.')$ implies $(0.)$.

Substituting $z=t u$ in condition $(1.')$ leads to 
\begin{equation*}
\mathbb{P}_{\vartheta}(p_{j}^{LFC}(X)\leq z)\leq z\frac{\mathbb{P}_{\vartheta}(p_{j}^{LFC}(X)\leq u)}{u}
\end{equation*}
for all $0\leq z<u\leq 1$ and all $\vartheta\in\Theta$ with $\theta_{j}(\vartheta)\in H_{j}$, which is equivalent to condition $(2.)$.

Now, we show that condition $(3.)$ implies condition $(1.')$. Let $u\in[0,1]$ be fixed. The inequality in $(1.')$ is always satisfied for $t=0$ and $t=1$. Since $t\mapsto t \mathbb{P}_{\vartheta}(p_{j}^{LFC}(X)\leq u)$ is a linear function and $t\mapsto\mathbb{P}_{\vartheta}(p_{j}^{LFC}(X)\leq t u)$ is a convex function, if $(3.)$ is fulfilled, it holds 
\begin{equation*}
\mathbb{P}_{\vartheta}(p_{j}^{LFC}(X)\leq t u)\leq t \mathbb{P}_{\vartheta}(p_{j}^{LFC}(X)\leq u)
\end{equation*}  
for all $t\in[0,1]$.

Now we assume that $p_{j}^{LFC}(X)=1-F_{\vartheta_{0}}(T_{j}(X))$. At first we show that conditions $(2.)$ and $(4.)$ are equivalent. To this end, notice that the term
\begin{equation*}
\frac{\mathbb{P}_{\vartheta}(p_{j}^{LFC}(X)\leq t)}{t}=\frac{\mathbb{P}_{\vartheta}(p_{j}^{LFC}(X)\leq t)}{\mathbb{P}_{\vartheta_{0}}(p_{j}^{LFC}(X)\leq t)}=\frac{\mathbb{P}_{\vartheta}(T_{j}(X)\geq F_{\vartheta_{0}}^{-1}(1-t))}{\mathbb{P}_{\vartheta_{0}}(T_{j}(X)\geq F_{\vartheta_{0}}^{-1}(1-t))}
\end{equation*}
is non-decreasing in $t$ if and only if $\mathbb{P}_{\vartheta}(T_{j}(X)\geq z)/\mathbb{P}_{\vartheta_{0}}(T_{j}(X)\geq z)=(1-F_{\vartheta}(z))/(1-F_{\vartheta_{0}}(z))$ is non-increasing in $z$.

Lastly, we show that conditions $(3.)$ and $(5.)$ are equivalent. Let $f_{\vartheta}$ be the Lebesgue density of $T_{j}(X)$ under $\vartheta\in\Theta$. Let $\vartheta\in\Theta$ be such that $\theta_{j}(\vartheta)\in H_{j}$ holds. The convexity of $t\mapsto\mathbb{P}_{\vartheta}(p_{j}^{LFC}(X)\leq t)$ is equivalent to 
\begin{align*}
\frac{\mathrm{d}}{\mathrm{d}t}\mathbb{P}_{\vartheta}(T_{j}(X)\geq F_{\vartheta_{0}}^{-1}(1-t))&=\frac{\mathrm{d}}{\mathrm{d}t}\big[1-F_{\vartheta}(F_{\vartheta_{0}}^{-1}(1-t))\big]\\
&=\frac{f_{\vartheta}(F_{\vartheta_{0}}^{-1}(1-t))}{f_{\vartheta_{0}}(F_{\vartheta_{0}}^{-1}(1-t))}
\end{align*}
being non-decreasing in $t$, or $f_{\vartheta}(z)/f_{\vartheta_{0}}(z)$ being non-increasing in $z$, which is equivalent to condition $(5.)$; cf. the remarks after Theorem$~\ref{thm:randpvalidity}$.
\qed
\end{proof}

In Theorem$~\ref{thm:randpvalidity}'$, the conditions $(2.)$ -- $(5.)$ are the same as in Theorem$~\ref{thm:randpvalidity}$. Condition $(1.')$ is equivalent to condition $(1.)$ in Theorem$~\ref{thm:randpvalidity}$ holding for all $c_{j}\in[0,1]$. Thus, $p_{j}^{rand}(X,U_{j},c_{j})$ being valid for all $c_{j}\in[0,1]$ implies the validity of $p_{j}^{rand}(X,U_{j},R_{j})$ for any random variable $R_{j}$ on $[0,1]$, $j=1,\ldots,m$. The reverse is also true, thus, the randomized $p$-value $p_{j}^{rand}(X,U_{j},R_{j})$ is valid for any random variable $R_{j}$ on $[0,1]$ if and only if it is valid for $R_{j}\equiv c_{j}$, for all $c_{j}\in[0,1]$, $j=1,\ldots,m$.

In the following, we show that Theorem$~\ref{thm:stochasticorder}$ still holds if we replace the constants $c_{j}\leq\tilde{c}_{j}$ by the random variables $R_{j}\leq_{\mathrm{st}}\tilde{R}_{j}$.

\subsubsection*{Theorem~\ref{thm:stochasticorder}\,$'$}
\label{thm:generalthmstochasticorder}

Let a model as in Section$~\ref{sec:modelsetup}$ be given and $j\in\{1,\ldots,m\}$ be fixed. If the cdf of $p_{j}^{LFC}(X)$ is convex under a fixed $\vartheta\in\Theta$, then it is
\begin{equation*}
p_{j}^{rand}(X,U_{j},R_{j})^{(\vartheta)}\leq_{\mathrm{st}}p_{j}^{rand}(X,U_{j},\tilde{R}_{j})^{(\vartheta)}
\end{equation*}
for any random variables $R_{j},\tilde{R}_{j}$ on $[0,1]$, with $R_{j}\leq_{\mathrm{st}}\tilde{R}_{j}$.

If the cdf of $p_{j}^{LFC}(X)$ is concave under a fixed $\vartheta\in\Theta$, then it holds that
\begin{equation*}
p_{j}^{rand}(X,U_{j},\tilde{R}_{j})^{(\vartheta)}\leq_{\mathrm{st}}p_{j}^{rand}(X,U_{j},R_{j})^{(\vartheta)}
\end{equation*}
for any random variables $R_{j}$ and $\tilde{R}_{j}$ with values in $[0,1]$ and with $R_{j}\leq_{\mathrm{st}}\tilde{R}_{j}$.

\begin{proof}
We first show both statements in Theorem$~\ref{thm:stochasticorder}'$ for constants $0\leq c_{j}\leq \tilde{c}_{j}\leq 1$ instead of random variables $R_{j}$ and $\tilde{R}_{j}$, which amounts to the statements in Theorem$~\ref{thm:stochasticorder}$.

For every fixed $t\in[0,1]$ and fixed $\vartheta\in\Theta$ we define the function
$q: [0, 1] \to [0, 1]$ by 
\begin{equation*}
q(c)=\mathbb{P}_{\vartheta}(p_{j}^{rand}(X,U_{j},c)\leq t)=t \mathbb{P}_{\vartheta}(p_{j}^{LFC}(X)>c)+\mathbb{P}_{\vartheta}(p_{j}^{LFC}(X)\leq c t).
\end{equation*}  
Furthermore, we denote by $f_{\vartheta}$ the Lebesgue density of $p_{j}^{LFC}(X)$ under $\vartheta$, such that it holds $q'(c)=-t f_{\vartheta}(c)+t f_{\vartheta}(c t)$, which is not positive if $f_{\vartheta}$ is non-decreasing and not negative if $f_{\vartheta}$ is non-increasing.

Let $R_{j}$ and $\tilde{R}_{j}$ be random variables fulfilling the assumptions of the theorem. 
If $q$ is non-decreasing, then it holds that $\mathbb{E}[q(R_{j})]\leq\mathbb{E}[q(\tilde{R}_{j})]$, and if $q$ is non-increasing it holds that $\mathbb{E}[q(R_{j})]\geq\mathbb{E}[q(\tilde{R}_{j})]$, where $\mathbb{E}$ refers to the joint distribution of $R_j$ and $\tilde{R}_{j}$. Since $\mathbb{E}[q(R_{j})]=\mathbb{P}_{\vartheta}(p_{j}^{rand}(X,U_{j},R_{j})\leq t)$ and $\mathbb{E}[q(\tilde{R}_{j})]=\mathbb{P}_{\vartheta}(p_{j}^{rand}(X,U_{j},\tilde{R}_{j})\leq t)$, the proof is completed.
\qed
\end{proof}

\begin{acknowledgements}
Financial support by the Deutsche Forschungsgemeinschaft via grant No. DI 1723/5-1 is gratefully acknowledged.
\end{acknowledgements}

\bibliographystyle{spbasic}      
\bibliography{Randomized-Schweder-arXiv}   


\newpage
\begin{figure}
\centering
  \includegraphics[width=0.9\textwidth]{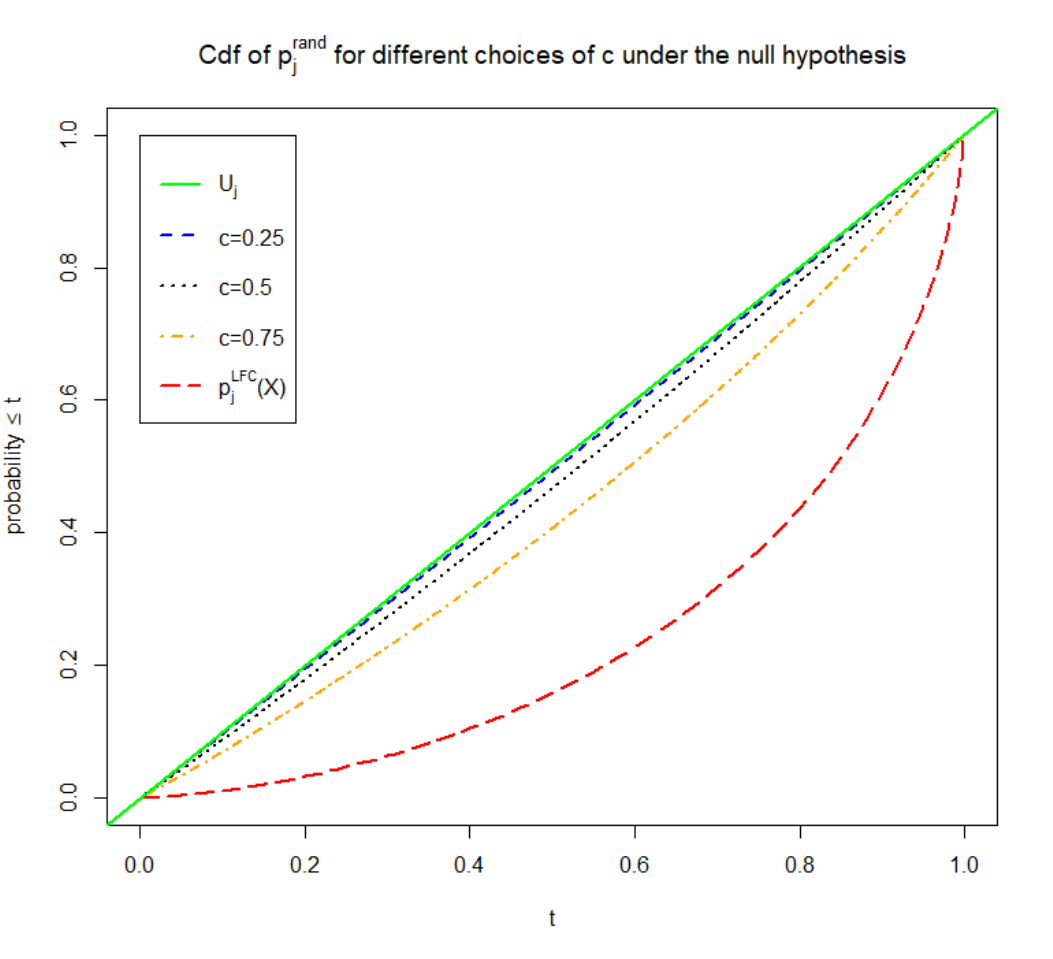}
  \caption{A comparison of the cdfs of $p_{j}^{rand}(X,U_{j},c)$, $c=0,0.25,0.5,0.75,1$, under the multiple $Z$-tests model for $\theta_{j}(\vartheta)=-1/\sqrt{n_{j}}$, where $n_{j}=50$. The value of $j \in \{1, \hdots, m\}$ is arbitrary.} 
  \label{fig:cdfsnull}
\end{figure}  

\begin{figure}
\centering
  \includegraphics[width=0.9\textwidth]{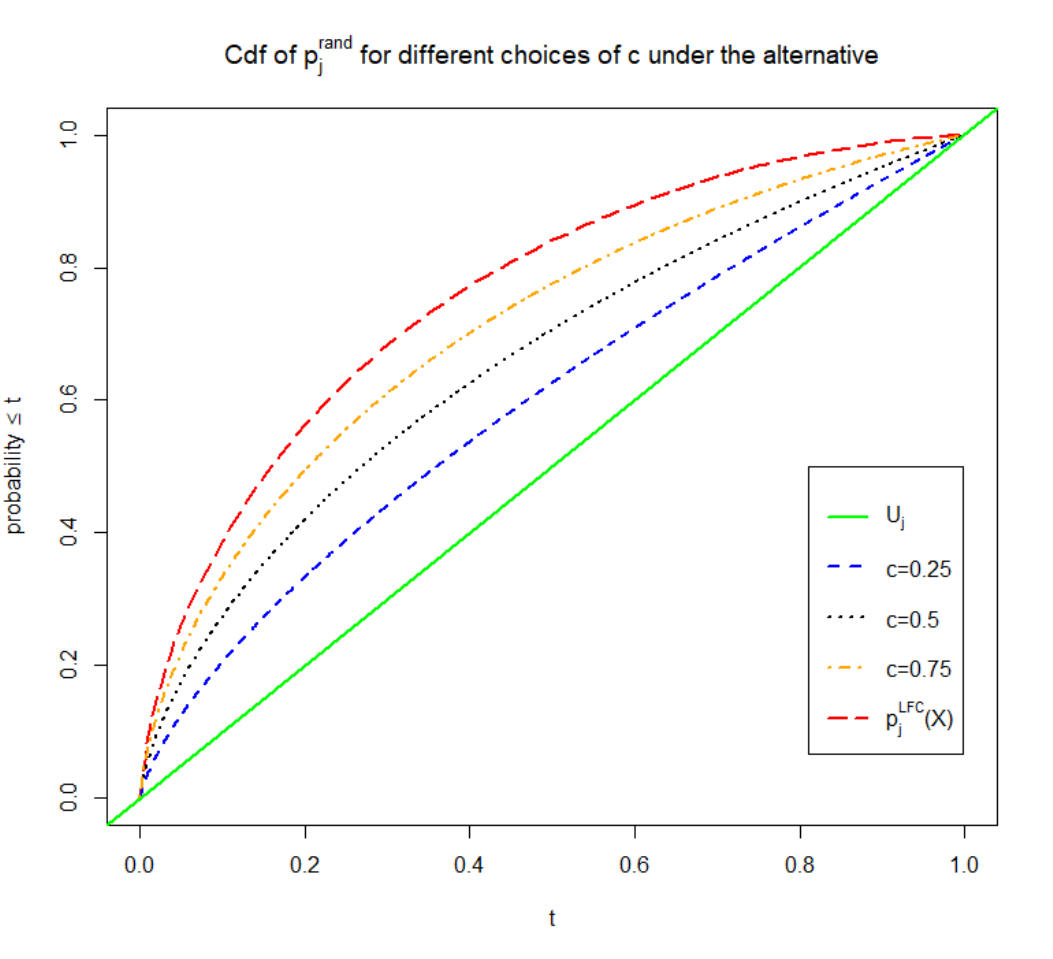}
  \caption{A comparison of the cdfs of $p_{j}^{rand}(X,U_{j},c)$, $c=0,0.25,0.5,0.75,1$, under the multiple $Z$-tests model for $\theta_{j}(\vartheta)=1/\sqrt{n_{j}}$, where $n_{j}=50$. The value of $j \in \{1, \hdots, m\}$ is arbitrary.}
  \label{fig:cdfsalternative}
\end{figure}  

\begin{figure}
\centering
  \includegraphics[width=0.9\textwidth]{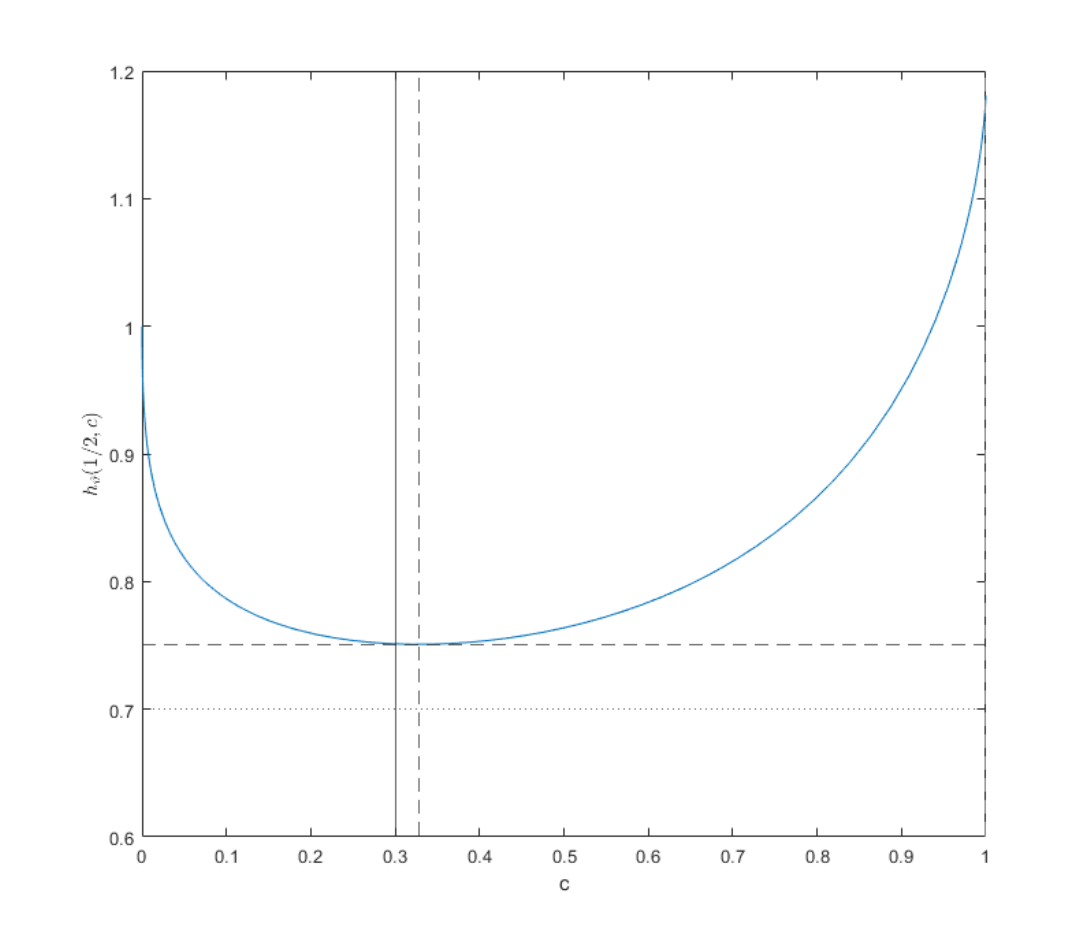}
  \caption{A plot of $c \mapsto h_{\vartheta}(1/2,c)$ for $c\in[0,1]$ under the multiple $Z$-tests model. We set $\pi_{0}=0.7$, and $\vartheta\in\Theta$ such that $\theta_{j}(\vartheta)=-1/\sqrt{50}$ if $H_{j}$ is true and $\theta_{j}(\vartheta)=2.5/\sqrt{50}$ if $K_{j}$ is true, $j=1,\ldots,m = 1{,}000$. The solid vertical line indicates $c=1-\pi_{0}$, while the dashed one indicates the minimizing argument $c^{\star}$ of $c\mapsto h_{\vartheta}(1/2,c)$. The dashed horizontal line indicates $h_{\vartheta}(1/2,c^{\star})$, while the dotted one indicates $\pi_0 = 0.7$.}
  \label{fig:Epi0example1}
\end{figure}

\begin{figure}
\centering
  \includegraphics[width=0.9\textwidth]{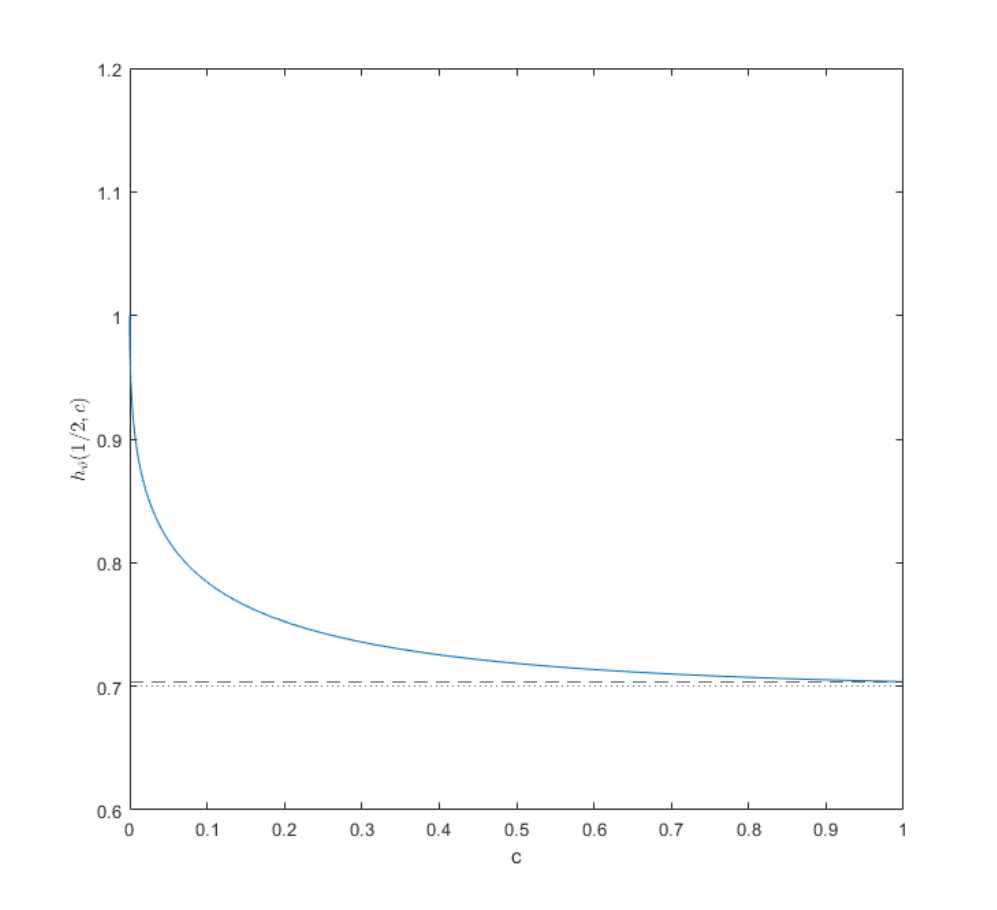}
  \caption{A plot of $c \mapsto h_{\vartheta}(1/2,c)$ for $c\in[0,1]$ under the multiple $Z$-tests model. We set $\pi_{0}=0.7$, and $\vartheta\in\Theta$ such that $\theta_{j}(\vartheta)=0$ if $H_{j}$ is true and $\theta_{j}(\vartheta)=2.5/\sqrt{50}$ if $K_{j}$ is true, $j=1,\ldots,m = 1{,}000$.}
  \label{fig:Epi0example3}
\end{figure} 

\begin{figure}
\centering
  \includegraphics[width=\textwidth]{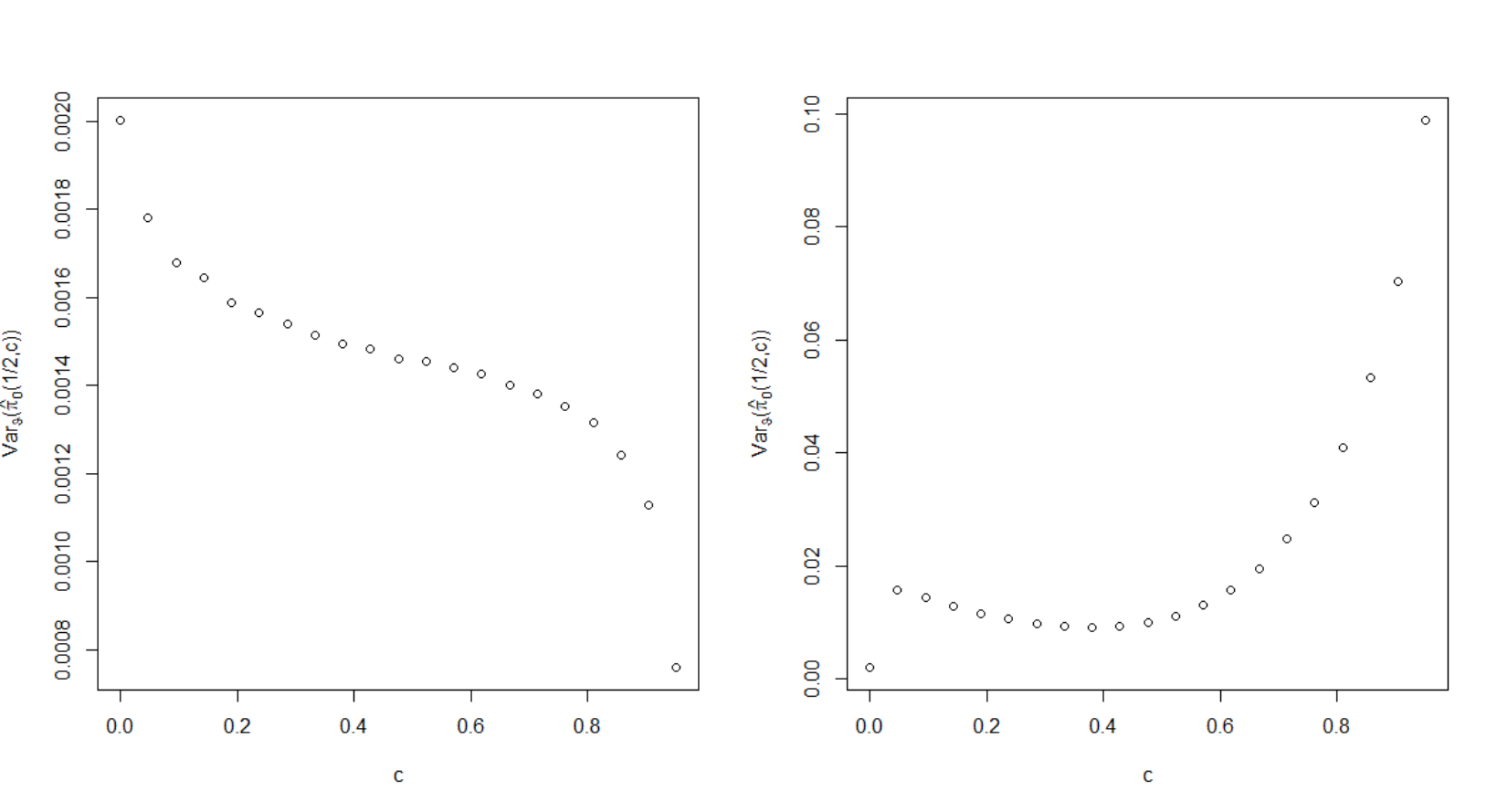}
  \caption{The variance $\mathrm{Var}_\vartheta(\hat{\pi}_{0}(1/2,c))$ for $c=0,0.05,\ldots,1$ in the multiple $Z$-tests model for $\pi_{0}=0.7$, and $\vartheta\in\Theta$ such that $\theta_{j}(\vartheta)=-1/\sqrt{50}$ if $H_{j}$ is true and $\theta_{j}(\vartheta)=2.5/\sqrt{50}$ if $K_{j}$ is true, $j=1,\ldots,m=1{,}000$. The LFC-based $p$-values are jointly stochastically independent in the left graph and have the Gumbel-Hougaard copula with copula parameter $\nu=2$ in the right graph.}
  \label{fig:varianceidandd}
\end{figure}

\begin{figure}
\centering
  \includegraphics[width=\textwidth]{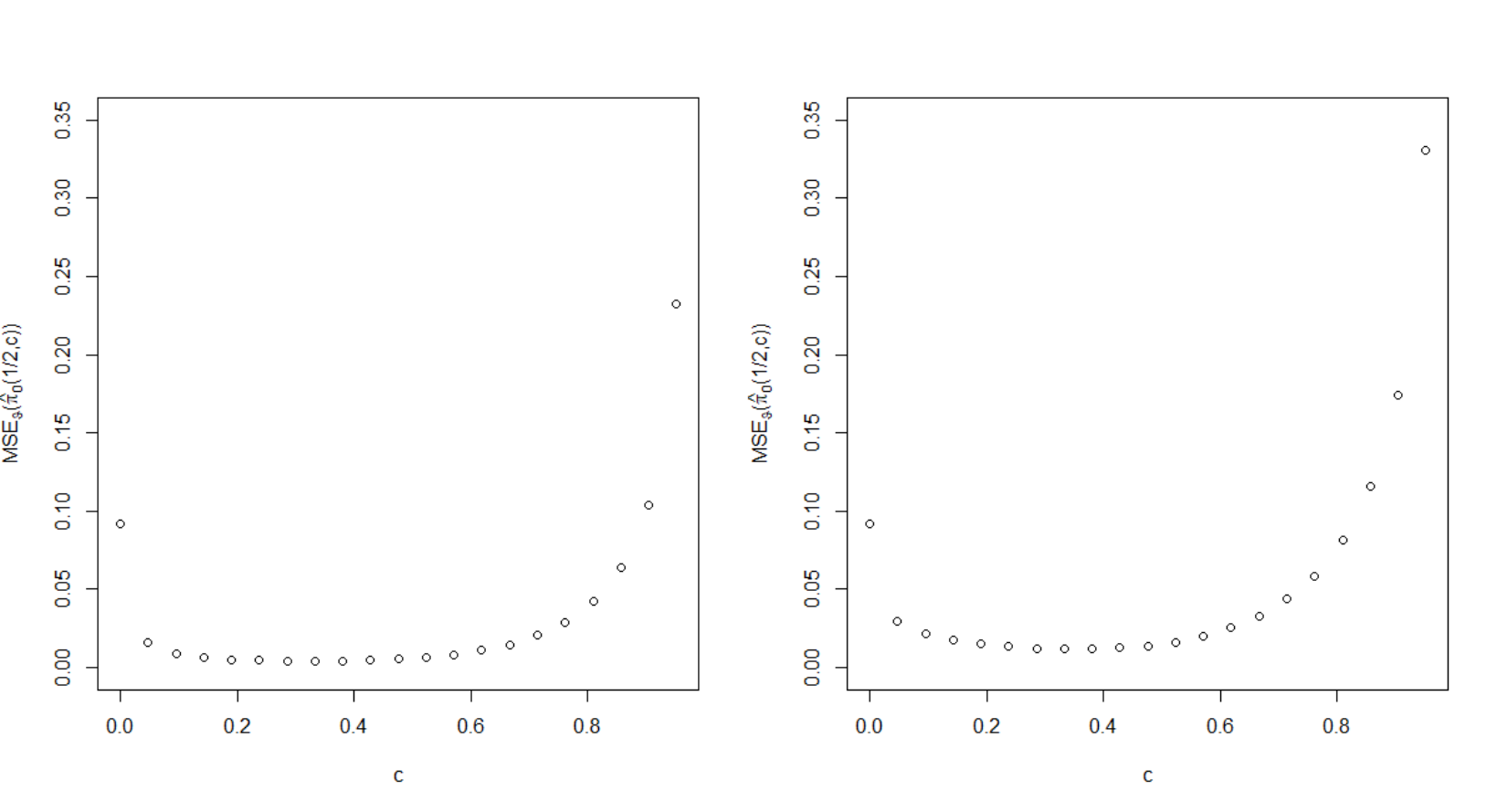}
  \caption{The mean squared error $\mathrm{MSE}_\vartheta(\hat{\pi}_{0}(1/2,c))$ for $c=0,0.05,\ldots,1$ in the multiple $Z$-tests model for $\pi_{0}=0.7$, and $\vartheta\in\Theta$ such that $\theta_{j}(\vartheta)=-1/\sqrt{50}$ if $H_{j}$ is true and $\theta_{j}(\vartheta)=2.5/\sqrt{50}$ if $K_{j}$ is true, $j=1,\ldots,m=1{,}000$. The LFC-based $p$-values are jointly stochastically independent in the left graph and have the Gumbel-Hougaard copula with copula parameter $\nu=2$ in the right graph.}
  \label{fig:MSEidandd}
\end{figure} 


\begin{figure}
\centering
  \includegraphics[width=0.9\textwidth]{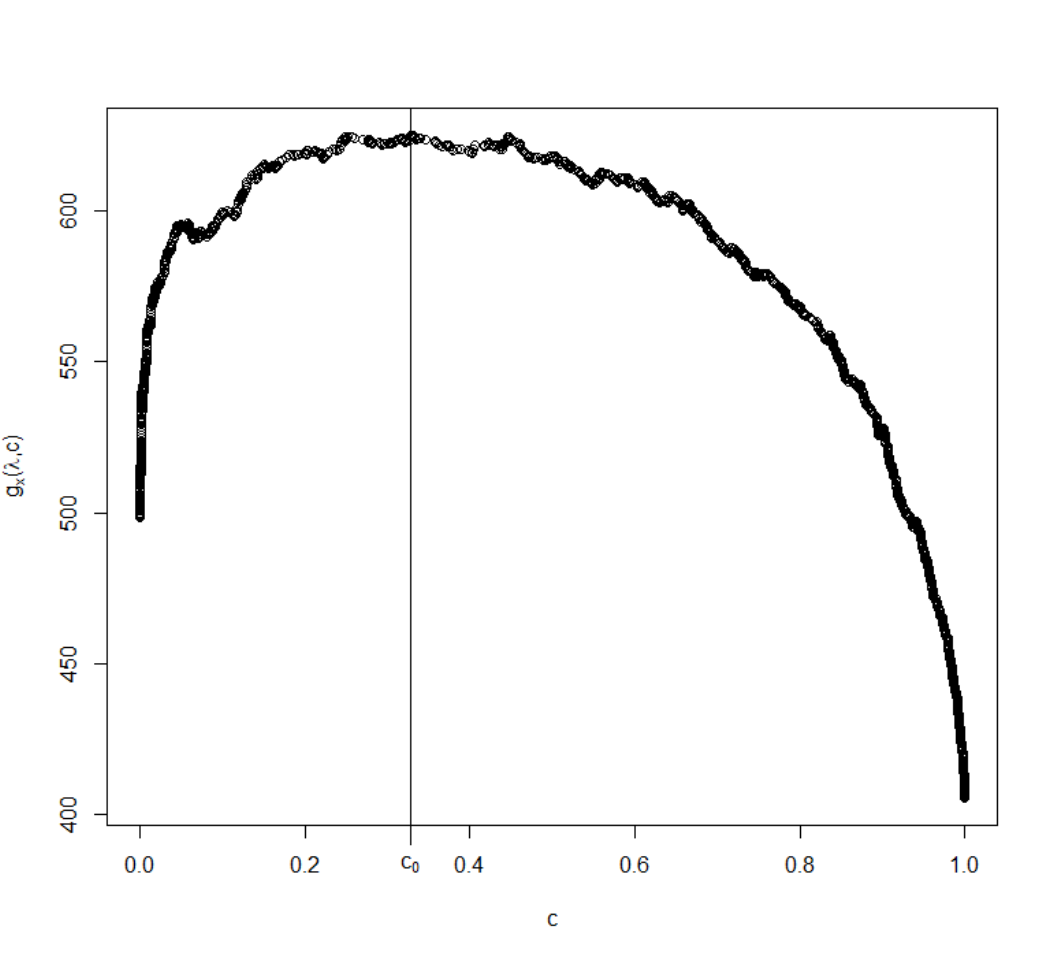}
  \caption{A plot of the function $c \mapsto g_x(\lambda, c)$, for $\lambda=1/2$, evaluated on those $1{,}406$ elements of the set $\{p_{1}^{LFC}(x),\ldots,p_{m}^{LFC}(x),\frac{p_{1}^{LFC}(x)}{\lambda},\ldots,\frac{p_{m}^{LFC}(x)}{\lambda}\}$ which are not larger than one. Here, $g_x(\lambda, \cdot)$ attains its maximum at $c_{0}=0.3286$.
	The underlying data $x$ have randomly been drawn under the multiple $Z$-tests model and the same parameter setting as for the left graph in Figure \ref{fig:varianceidandd}.}
  \label{fig:gplot}
\end{figure}  

\begin{figure}
\centering
  \includegraphics[width=0.9\textwidth]{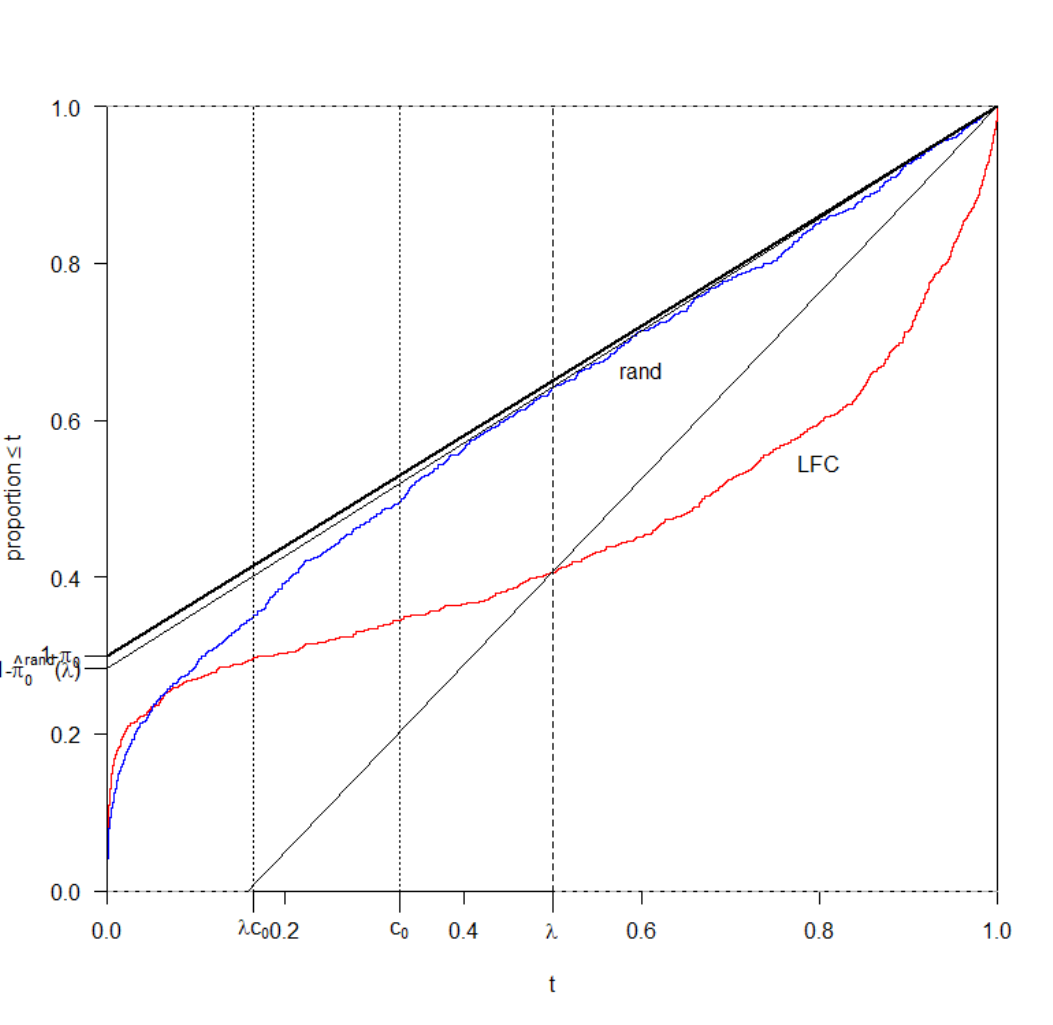}
  \caption{The ecdfs $\hat{F}_{m}$ of $(p_{j}^{LFC}(x))_{j=1,\ldots,m}$ and $(p_{j}^{rand}(x,u_{j},c_{0}))_{j=1,\ldots,m}$, respectively, under the multiple $Z$-tests model for $\pi_{0}=0.7$. The underlying data $x$ are the same as in Figure \ref{fig:gplot}. The thicker straight line connects the points $(0,1-\pi_{0})$ and $(1,1)$, while the two thinner straight lines connect $(\lambda,\hat{F}_{m}(\lambda))$ with $(1,1)$ for the two aforementioned ecdfs. The offset of each of the two thinner lines at $t=0$ equals $1-\hat{\pi}_{0}(\lambda)$ for the respective ecdf, where $\lambda=1/2$. The two dotted vertical lines indicate the interval $[\lambda c_0, c_0]$, where $c_0$ is as in Figure \ref{fig:gplot}.}
  \label{fig:ecdfswithcstar}
\end{figure}

\end{document}